\title[On generalizations of the pentagram map]{On generalizations of the pentagram map: discretizations of AGD flows}
\author{Gloria Mar\'i Beffa}
\newtheorem{theorem}{Theorem}
\numberwithin{theorem}{section}
\newtheorem{conjecture}[theorem]{Conjecture}
\newtheorem{lemma}[theorem]{Lemma}
\newtheorem{proposition}[theorem]{Proposition}
\theoremstyle{definition}
\newtheorem{definition}[theorem]{Definition}
\def\RP{\mathbb {RP}}
\def\R{\mathbb R}
\def\sl{\mathfrak{sl}}
\def\g{\mathfrak{g}}
\def\bal{\mathfrak{b}}
\def\HH{\mathcal{H}}
\def\C{\mathcal{C}}
\def\PSL{\mathrm{PSL}}
\def\SL{\mathrm{SL}}
\def\M{\mathcal{M}}
\def\N{\mathcal{N}}
\def\P{\mathcal{P}}
\def\n{\mathfrak{n}}
\def\l{\Gamma}
\def\le{\Gamma_\epsilon}
\def\e{\epsilon}
\def\al{\alpha}
\def\be{\beta}
\def\D{\Delta}
\def\p{\Pi}
\def\s{\varsigma}
\def\Lo{\mathcal{L}}
\def\r{{\bf r}}
\def\k{{\bf k}}
\def\ka{\kappa}
\def\h{{\mathfrak h}}
\def\a{{\bf a}}
\def\b{{\bf b}}
\def\c{{\bf c}}
\def\m{{\bf m}}
\def\nb{{\bf n}}
\def\g{\gamma}
\def\d{\delta}
\def\M{\mathbf{M}}
\begin{document}
\maketitle
\begin{abstract} In this paper we investigate discretizations of AGD flows whose projective realizations are defined by intersecting different types of subspaces in $\RP^m$. These maps are natural candidates to generalize the pentagram map,  itself defined as the intersection of consecutive shortest diagonals of a convex polygon, and a completely integrable discretization of the Boussinesq equation. We conjecture that the $k$-AGD flow in $m$ dimensions can be discretized using one $k-1$ subspace and $k-1$ different $m-1$-dimensional subspaces of $\RP^m$.
\end{abstract}
\section{Introduction}
The pentagram map is defined on planar, convex n-gons, a space we will denote by $\C_n$. The map $T$ takes a vertex $x_n$ to the intersection of two segments: one is created by joining the vertices to the right and to the left of the original one, $\overline{x_{n-1}x_{n+1}}$, the second one by joining the original vertex to the second vertex to its right $\overline{x_nx_{n+2}}$ (see Fig. 1). These newly found vertices form a new n-gon. The pentagram map takes the first n-gon to this newly formed one. As surprisingly simple as this map is, it has an astonishingly large number of properties, see \cite{S1, S2, S3}, \cite{ST}, \cite{OST} for a thorough description.

The name pentagram map comes from the following classical fact: if $P \in \C_5$ is a pentagon, then $T(P)$ is projectively equivalent to $P$. Other relations seem to be also classical: if $P\in \C_6$ is a hexagon, then $T^2(P)$ is also projectively equivalent to $P$. The constructions performed to define the pentagram map can be equally carried out in the projective plane. In that case $T:\C_5\to \C_5$ is the identity, while $T:\C_6\to \C_6$ is an involution. In general, one should not expect to obtain a closed orbit for any $\C_n$; in fact orbits seem to exhibit a quasi-periodic behavior classically associated to completely integrable systems. This was conjectured in \cite{S3}.

A recent number of papers (\cite{OST, S1, S2, S3, ST}) have studied the pentagram map and stablished its completely integrable nature, in the Arnold-Liouville sense. The authors of \cite{OST} defined the pentagram map on what they called  {\it n-twisted polygons}, that is, infinite polygons with vertices $x_m$, for which $x_{n+m} = M(x_m)$, where $M$ is the {\it monodromy}, a projective automorphism of $\RP^2$. They proved that, when written in terms of the {\it projective invariants} of twisted polygons, the pentagram map was in fact Hamiltonian and completely integrable. They displayed a set of preserved quantities and proved that almost every universally convex n-gon lie on a smooth torus with a $T$-invariant affine structure, implying that almost all the orbits have a quasi-periodic motion under the map. Perhaps more relevant to this paper, the authors showed that the pentagram map, when expressed in terms of projective invariants, is a discretization of the Boussinesq equation, a well-known  completely integrable system PDE.

The Boussinesq equation is one of the best known completely integrable PDEs. It is one of the simplest among the so-called Ader--Gel'fand-Dickey (AGD) flows. These flows are biHamiltonian and completely integrable. Their first Hamiltonian structure was originally defined by Adler in \cite{A} and proved to be Poisson by Gel'fand and Dickey in \cite{GD}. The structure itself is defined on the space $\Lo$ of periodic and scalar differential operators of the form
 \[
L = D^{n+1} + k_{n-1}D^{n-1}+\dots + k_1 D+k_0,
\]
Hamiltonian functionals on $\Lo$ can be written as
 \[
 \HH_R(L) = \int_{S^1} \mathrm{res}(RL) dx
 \]
 form some pseudo-differential operator $R$, where res denotes the residue, i.e., the coefficient of $D^{-1}$. The Hamiltonian functionals for the AGD flows are given by 
 \begin{equation}\label{H}
 \HH(L) = \int_{S^1}\mathrm{res}(L^{k/m+1})dx
 \end{equation}
 $k=1,2,3,4,\dots, m$. The simplest case $m=1$, $k=1$ is equal to the KdV equation, the case $m = 2$ and $k = 2$ corresponds to the Boussinesq equation. Some authors give the name AGD flow to the choice $k = m$ only. 
 
The author of this paper linked the first AGD Hamiltonian structure and the AGD flows to projective geometry, first in \cite{HLM} and \cite{M4}, and later in \cite{M1} and \cite{M3} where she stablished a geometric connection between Poisson structures and homogeneous manifolds of the form $G/H$ with $G$ semisimple. In particular, she found geometric realizations of the Hamiltonian flows as curve flows in $G/H$. The case $G = \rm{PSL}(n+1)$ produces projective realizations of the AGD flow. From the work in \cite{OST} one can see that the continuous limit of the pentagram map itself is the projective realization of the Boussinesq equation. 

In this paper we investigate possible generalizations of the pentagram map to higher dimensional projective spaces. Some discretizations of AGD flows have already appeared in the literature (see for example \cite{LN}), but it is not clear to us how they are related to ours. In particular,  we investigate maps defined as intersections of different types of subspaces in $\RP^m$ whose continuous limit are projective realizations of AGD flows. In section 2 we describe the connection between projective geometry and AGD flows and, in particular, we detail the relation between the AGD Hamiltonian functional and the projective realization of the AGD flow. In section 3 we describe the pentagram map a other maps with the same continuous limit as a simple case in which to describe our approach to finding these discretizations. In section 4 we describe some of the possible generalizations.

In particular, in Theorems 4.1 and 4.2 we show that the projective realization of the AGD flow associated to 
\[
\HH(L) = \int_{S^1}\rm{res}(L^{2/(m+1)}) dx
\]
has a discretization defined through the intersection of one segment and one hyperplane in $\RP^m$. In section 4.2 we analyze the particular cases of $\RP^3$ and $\RP^4$. We show that the projective realization of the flow associated to $\HH(L) = \int_{S^1}\rm{res}(L^{3/4})dx$ can be discretized using the intersection of three planes in $\RP^3$, while the one associated to $\HH(L) = \int_{S^1}\rm{res}(L^{3/5})dx$ can be done using the intersection of one plane and two 3-dimensional subspaces in $\RP^4$ (we also show that it cannot be done with a different choice of subspaces). In view of these results, we conjecture that the projective realization of the the AGD Hamiltonian flow corresponding to (\ref{H}) can be discretized using the intersection one $k-1$ subspace and $k-1$, $m-1$-dimensional subspaces in $\RP^m$.

These results are found by re-formulating the problem of finding the discretizations as solving a system of Dyophantine equations whose solutions determine the choices of vertices needed to define the subspaces. These systems are increasingly difficult to solve as the dimension goes up, hence it is not clear how to solve the general case with this algebraic approach. Furthermore, as surprising as it is that we get a solution at all, the solutions are not unique (nor surprising once we get one solution). As we said before, the pentagram map has extraordinary properties and we will need to search among the solutions in this paper to hopefully find the appropriate map that will inherit at least some of them. One should also check which maps among the possibilities presented here are completely integrable in its own right, that is, as discrete system. It would be very exciting if these two aspects were connected, as we feel this should be the way to learn more before attempting the general case. This is highly non-trivial as not even the invariants of twisted polygons in $\RP^3$ are known. 

This paper is supported by the author NSF grant DMS \#0804541.

\section{Projective geometry and the Adler-Gelfand-Dikii flow} 

\subsection{Projective Group-based moving frames}

Assume we have a curve $\gamma:\R \to \RP^{n}$ and assume the curve has a {\it monodromy}, that is, there exists $M\in \PSL(n+1)$ such that $\gamma(x+T) = M\cdot \gamma(x)$ where $T$ is the period. The (periodic) differential invariants for this curve are well-known and can be described as follows. Let $\l:\R\to\R^{n+1}$ be the unique lift of $\gamma$ with the condition 
\begin{equation}\label{norm}
\det(\l, \l',\dots,\l^{(n)}) = 1.
\end{equation}
 $\l^{(n+1)}$ will be a combination of previous derivatives.  Since the coefficient of $\l^{(n)}$ in that combination is the derivative of $\det(\l, \l',\dots,\l^{(n)})$, and hence zero, there exists periodic functions $k_i$ such that
\begin{equation}\label{Wilc}
\l^{(n+1)} + k_{n-1}\l^{(n-1)}+\dots+k_1\l'+k_0\l=0.
\end{equation}
 The functions $k_i$ are independent generators for the projective differential invariants of $\gamma$ and they are usually called the {\it Wilczynski invariants} (\cite{Wi}).
\begin{definition}(\cite{FO}) A $k$th order left - resp. right - {\it group-based} moving frame is a map
\[
\rho: J^{(k)}(\R, \RP^n) \to \PSL(n+1)
\]
equivariant with respect to the {\it prolonged} action of $\PSL(n+1)$ on the jet space $J^{(k)}(\R, \RP^n)$ (i.e. the action $g\cdot(x,u,u',u'', \dots, u^{(k)}) = (g\cdot u, (g\cdot u)', \dots, (g\cdot u)^{(k)}$) and the left - resp. right - action of $\PSL(n+1)$ on itself.

The matrix $K = \rho^{-1}\rho'$ (resp. $K = \rho'\rho^{-1}$) is called the {\it Maurer-Cartan matrix} associated to $\rho$. For any moving frame, the entries of the Maurer-Cartan matrix generate all differential invariants of the curve (see \cite{Hu}). The equation $\rho' = \rho K$ is called the {\it Serret-Frenet equation} for $\rho$.
\end{definition}

The projective action on $\gamma$ induces an action of $\SL(n+1)$ on the lift $\l$. This action is linear and therefore the matrix $\hat\rho = (\l, \l',\dots,\l^{(n)})\in \SL(n+1)$ is in fact a left moving frame for the curve $\gamma$. We can write equation (\ref{Wilc}) as the system $\hat\rho' = \hat\rho \hat K$
where
\begin{equation}\label{wilcinv}
\hat K = \begin{pmatrix} 0&0&\dots&0&-k_0\\ 1&0&\dots&0&-k_1\\ \vdots&\ddots&\ddots&\vdots&\vdots\\ 0&\dots&1&0&-k_{n-1}\\0&\dots&0&1&0\end{pmatrix},
\end{equation}
is the Maurer-Cartan matrix  generating the Wilczynski invariants.

Group-based moving frames also provide a formula for a general invariant evolution of projective curves. The description is as follows.

We know that $\RP^n \approx \PSL(n+1)/H$, where $H$ is the isotropic subgroup of the origin. For example, if we choose homogeneous coordinates in $\RP^n$ associated to the lift $u \to \begin{pmatrix} 1\\ u\end{pmatrix}$, the isotropic subgroup $H$ is given by matrices $M\in \SL(n+1)$ such that $e_k^T M e_1 = 0$ for $k = 2,\dots,n+1$, and we can choose a section $\varsigma: \RP^n\to \PSL(n+1)$
\[
\varsigma (u) = \begin{pmatrix} 1&0\\ u&I\end{pmatrix}
\]
satisfying $\varsigma(o) = I$. Let $\Phi_g: \RP^n \to \RP^n$ be defined by the action of $g\in \PSL(n+1)$ on the quotient, that is, $\Phi_g(x) = \Phi_g([y])= [gy] = g\cdot x$. The section $\varsigma$ is compatible with the action of $\PSL(n+1)$ on $\RP^n$, that is, 
\begin{equation}\label{actiondef}
g \s(u) = \s(\Phi_g(u)) h
\end{equation}
 for some $h \in H$. If $\h$ is the subalgebra of $H$,  consider the splitting of the Lie algebra
$\g =  \h\oplus \m$, where $\m$ is not, in general, a Lie subalgebra. Since $\s$ is a section, $d\s(o)$ is an isomorphism between $\m$ and $T_o\RP^n$. 

The following theorem was proved in \cite{M2} for a general homogeneous manifold and it describes the most general form of invariant
evolutions in terms of left moving frames.
\begin{theorem} \label{invev}Let $\gamma(t,x) \in \RP^n$ be a flow, solution of an invariant evolution of the form
\[
\gamma_t = F(\gamma,\gamma_x, \gamma_{xx}, \gamma_{xxx}, \dots).
\]
Assume the evolution is invariant under the action of $\PSL(n+1)$, that is, $\PSL(n+1)$ takes solutions to solutions.
Let $\rho(t,x)$ be a family of left moving frames along $\gamma(t,x)$ such that $\rho\cdot o = \gamma$.
Then, there exists an invariant family of tangent vectors $\r(t,x)$, i.e., a family depending on the differential
invariants of $\gamma$ and their derivatives, such that
\begin{equation}\label{projev}
\gamma_ t = d\Phi_\rho(o) \r.
\end{equation}
\end{theorem}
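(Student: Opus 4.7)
The plan is to construct $\r$ explicitly by inverting the desired formula and then verify its $\PSL(n+1)$-invariance. Since $\rho(t,x)\cdot o = \gamma(t,x)$, the projective transformation $\Phi_\rho$ is a diffeomorphism of $\RP^n$ sending $o$ to $\gamma$, and hence $d\Phi_\rho(o) : T_o\RP^n \to T_\gamma\RP^n$ is a linear isomorphism. So the relation $\gamma_t = d\Phi_\rho(o)\r$ forces
$$\r(t,x) = \left(d\Phi_\rho(o)\right)^{-1}\gamma_t \in T_o\RP^n,$$
giving existence for free. The content of the theorem is that this $\r$ depends only on the differential invariants of $\gamma$.

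For invariance, I would take any $g\in\PSL(n+1)$ and replace $\gamma$ by $\Phi_g\circ\gamma$. Left equivariance of the moving frame transforms $\rho\mapsto g\rho$, so $\Phi_{g\rho}=\Phi_g\circ\Phi_\rho$ and the chain rule gives $d\Phi_{g\rho}(o) = d\Phi_g(\gamma)\circ d\Phi_\rho(o)$. By the $\PSL(n+1)$-invariance of the evolution equation, the transformed time derivative is $(\Phi_g\circ\gamma)_t = d\Phi_g(\gamma)\gamma_t$. Plugging into the definition,
$$\r^{\mathrm{new}} = \bigl(d\Phi_g(\gamma)\circ d\Phi_\rho(o)\bigr)^{-1} d\Phi_g(\gamma)\gamma_t = \left(d\Phi_\rho(o)\right)^{-1}\gamma_t = \r.$$
Thus $\r$ is unchanged under the prolonged action of $\PSL(n+1)$ on the jets of $\gamma$, and the standard moving-frame theorem (Fels--Olver, Hubert, as cited in \cite{Hu}) guarantees that any such equivariant-zero function of the jets is a function of the Maurer-Cartan entries and their derivatives, i.e., of the Wilczynski invariants $k_i$ and their $x$-derivatives. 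Hence $\r$ qualifies as an invariant family of tangent vectors.

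The subtle point I expect to have to address is the non-uniqueness of the left moving frame. Even after picking the section $\varsigma$, $\rho$ is defined only up to right multiplication $\rho\mapsto\rho h$ with $h: \R\to H$. The base point $\gamma=\rho\cdot o$ is unaffected because $h\cdot o=o$, but $d\Phi_{\rho h}(o) = d\Phi_\rho(o)\circ d\Phi_h(o)$, so $\r$ changes by the isotropy action $d\Phi_h(o)$ on $T_o\RP^n$. Using the splitting $\g=\h\oplus\m$ and the identification $d\s(o):\m\to T_o\RP^n$, this ambiguity is exactly the linear action of $H$ on $\m$; once a canonical normalization of $\rho$ is fixed (equivalently, once the moving frame is obtained by a cross-section to the prolonged action), $\r$ is a genuine invariant $\m$-valued function of the differential invariants, and the formula (\ref{projev}) describes the most general invariant evolution because the correspondence $\gamma_t\leftrightarrow\r$ through $d\Phi_\rho(o)$ is bijective.
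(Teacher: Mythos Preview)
The paper does not actually prove this theorem: it is quoted verbatim from \cite{M2} (``The following theorem was proved in \cite{M2} for a general homogeneous manifold\dots''), so there is no in-paper proof to compare against. Your argument is the standard one and is correct: defining $\r=(d\Phi_\rho(o))^{-1}\gamma_t$, using left equivariance $\rho\mapsto g\rho$ together with $(\Phi_g\gamma)_t=d\Phi_g(\gamma)\gamma_t$ to show $\r$ is constant along the prolonged $\PSL(n+1)$-orbits, and then invoking the Fels--Olver replacement theorem to conclude that each component of $\r$ is a function of the generating differential invariants (the Maurer--Cartan entries). This is essentially how the result is established in \cite{M2}.

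Two small remarks. First, your appeal to Fels--Olver requires that $\r$ be a function of the \emph{spatial} jets of $\gamma$ alone; this holds because $\gamma_t=F(\gamma,\gamma_x,\dots)$ by hypothesis and $\rho$ is, by definition of a moving frame, a function of those same jets, so $\r$ is a well-defined invariant on $J^{(k)}(\R,\RP^n)$. You might make that explicit. Second, your closing paragraph on the ambiguity $\rho\mapsto\rho h$ is accurate but is commentary rather than part of the proof: the theorem fixes a particular equivariant $\rho$, and for that $\rho$ your construction already delivers a unique invariant $\r$. The isotropy discussion explains how the answer transforms under a change of frame, which is useful context but not needed for the existence claim in the statement.
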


Assume we choose the section $\varsigma$ above. If $\gamma$ is a curve in homogenous coordinates in $\RP^n$, then $\l = W^{-\frac1{n+1}}\begin{pmatrix} 1\\ \gamma\end{pmatrix}$ with $W = \det(\gamma',\dots,\gamma^{(n)})$. In that case, if
$\gamma$ is a solution of (\ref{projev}), with $\rho = (\l, \l',\dots,\l^{(n)})$,    after minor calculations one can directly obtain that $\l$ is a solution of
\[
\l_t = (\l',\dots,\l^{(n)})\r + r_0 \l
\]
where, if $\r = (r_i)$ and
\[
r_0 = -\frac1{n+1} \frac{W_t}W-\sum_{s=1}^n\left(W^{-\frac1{n+1}}\right)^{(s)}r_s.
\]
The coefficient $r_0$ can be written in terms of $\r$ and $\gamma$ once the normalization condition (\ref{norm}) is imposed to the flow $\l$.

Summarizing, the most general form for an invariant evolution of projective curves is given by the projectivization of the lifted evolution
\[
\l_t = r_{n}\l^{(n)}+\dots+r_1\l'+r_0\l
\]
where $r_i$ are functions of the Wilczynski invariants and their derivatives, and where $r_0$ is uniquely determined by the other entries $r_i$ once we enforce (\ref{norm}) on $\l$.

\subsection{AGD Hamiltonian flows and their projective realizations}\label{AGD}

Drinfeld and Sokolov proved in \cite{DS} that the Adler-Gelfand-Dickey (AGD) bracket and its symplectic companion are the reduction of two well-known compatible Poisson brackets defined on the space of loops in $\sl(n+1)^\ast$ (by compatible we mean that their sum is also a Poisson bracket). The author of this paper later proved (\cite{M3}) that the reduction of the main of the two brackets can always be achieved for any homogeneous manifold $G/H$ with $G$ semisimple, resulting on a Poisson bracket defined on the space of differential invariants of the flow. The symplectic companion reduces only in some cases. She called the reduced brackets Geometric Poisson brackets. The AGD bracket is the projective Geometric Poisson bracket (\cite{M3}).

Furthermore, geometric Poisson brackets are closely linked to invariant evolutions of curves, as we explain next. 

\begin{definition}
Given a $G$-invariant evolution of curves in $G/H$, $\gamma_t = F(\gamma, \gamma', \gamma'', \dots)$, there exists an evolution on the differential invariants induced by the flow $\gamma(t,x)$ of the form $\k_t = R(\k, \k', \k'',\dots)$. We say $\k(t,s)$ is the {\it invariantization} of $\gamma(t,x)$. We also say that $\gamma(t,x)$ is a {\it $G/H$-realization} of $\k(t,x)$.
\end{definition}

 As proved in \cite{M3}, any geometric Hamiltonian evolution with respect to a Geometric Poisson bracket can be realized as an invariant evolution of curves in $G/H$. Furthermore, the geometric realization of the Hamiltonian system could be algebraically obtained directly from the moving frame and the Hamiltonian functional. This realization is not unique: a given evolution $\k_t = R(\k, \k', \dots)$ could be realized as $\gamma_t = F(\gamma, \gamma', \dots)$ for more than one choice of manifold $G/H$. For more information, see \cite{M3}.
 
 We next describe this relation in the particular case of $\RP^n$.

\begin{lemma}\label{gauge}

There exists an invariant gauge matrix $g$ (i.e. a matrix in $\SL(n+1)$ such that its entries are differential invariants) such that
\[
g^{-1} g_x +g^{-1} \hat K g^{-1} = K
\]
where $\hat K$ is as in (\ref{wilcinv}) and where
\begin{equation}\label{K}
K = \begin{pmatrix} 0&\ka_{n-1}&\ka_{n-2}&\dots &\ka_1&\ka_0\\ 1&0&0&\dots &0&0\\ 0&1&0&\dots&0&0\\
\vdots&\ddots&\ddots &\ddots&\vdots&\vdots\\ 0&\dots&0&1&0&0\\ 0&\dots&0&0&1&0\end{pmatrix},
\end{equation}
for some choice of invariants $\ka_i$. The invariants $\ka_i$ form a generating and functional independent set of differential invariants.
\end{lemma}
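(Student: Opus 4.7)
The plan is to realize $K$ as a Drinfeld--Sokolov type normal form of $\hat K$ obtained by gauging through a unipotent subgroup of $\SL(n+1)$. Observe first that both $\hat K$ and $K$ share the same subdiagonal shift $N_- = \sum_{i=1}^{n} e_{i+1}e_i^T$, and that both $\hat K - N_-$ and $K - N_-$ lie in the strictly upper triangular nilpotent subalgebra $\n_+$ of $\sl(n+1)$: in $\hat K$ the Wilczynski invariants occupy the last column, while in $K$ the $\ka_i$ occupy the first row. I will therefore look for a gauge matrix of the form $g = I + N$ with $N$ strictly upper triangular and entries polynomial in the $k_j$ and their derivatives; such $g$ lies automatically in $\SL(n+1)$ since $\det(I+N)=1$, and its entries will be differential invariants by construction.

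Rewriting the desired gauge condition $g^{-1} g_x + g^{-1} \hat K g = K$ as
\[
N' = (K - \hat K) + NK - \hat K N,
\]
I would solve for $N$ and the $\ka_i$ by induction along successive super-diagonals. Decompose $N = N_1 + \cdots + N_n$ by principal grading, where $N_s$ denotes the $s$-th super-diagonal piece. Commutators with $N_-$ lower principal degree by one, so matching the equation on the $s$-th super-diagonal expresses $[N_-, N_{s+1}]$ (together with the relevant $\ka_i$) as an algebraic function of $N_1,\dots, N_s$, the $k_j$, and derivatives of entries already determined. Since $\mathrm{ad}\, N_-$ is injective off the transversal slice spanned by the first-row basis matrices $e_1 e_{n+2-i}^T$, at each step this determines $N_{s+1}$ uniquely and identifies $\ka_{n-s}$ as an explicit differential polynomial in the $k_j$. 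After $n$ sweeps the recursion terminates and we obtain both $g$ and the $\ka_i$.

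To verify that the $\ka_i$ form a generating, functionally independent set, I would observe that the triangular recursion above is invertible: reading it backwards expresses each $k_j$ as a differential polynomial in the $\ka_i$, so the two collections generate the same differential algebra. Since the Wilczynski invariants are known to be a complete, functionally independent set of generators of the algebra of differential invariants of projective curves in $\RP^n$, the same holds for the $\ka_i$.

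The main obstacle is the underlying combinatorial claim that $\mathrm{ad}\, N_-$, restricted to $\n_+$, admits the first-row slice as a transversal to its image; this is exactly the content of the Drinfeld--Sokolov lemma (\cite{DS}), so one could in fact appeal to that theorem directly at the cost of not producing explicit formulas for $g$ and the $\ka_i$. Carrying out the recursion by hand for small $n$ already reveals the pattern and yields closed-form expressions that confirm both the existence of $g$ and the explicit dependence of the $\ka_i$ on the Wilczynski invariants and their derivatives.
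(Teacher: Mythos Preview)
Your proposal is correct and is essentially the same argument as the paper's: both are the Drinfeld--Sokolov reduction, recognizing that the first-row matrices $e_1 e_{n+2-i}^T$ furnish a complement $V_r$ to $[\,N_-,\mathfrak{b}_{r+1}\,]$ inside each super-diagonal $\mathfrak{b}_r$. The paper simply cites \cite{DS} and observes that this choice of transversal is admissible, whereas you spell out the inductive mechanism by which the unipotent gauge $g=I+N$ and the $\ka_i$ are built one super-diagonal at a time; your closing remark that one could ``appeal to that theorem directly'' is exactly what the paper does.
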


\begin{proof} 
This lemma is a direct consequence of results in \cite{DS}. In particular, of the remark following Proposition 3.1 and its Corollary (page 1989). 

The authors remark how the choice of canonical form for the matrix $K$ is not unique and other canonical forms can be obtained using a gauge. In their paper the matrix $K$ is denoted by $q^{can}+\Lambda$ where $\Lambda = \sum_1^{n-1} e_{i+1,i}$. If we denote by $\bal$ the subalgebra of upper triangular matrices in $\SL(n)$ and by $\bal_r$ the ith diagonal (that is, matrices $(a_{ij})$ such that $a_{ij}=0$ except when $j-i = r$) then we can choose $q^{can} = \sum q_r$ where $q_r \in V_r$ and $V_r$ are $1$-dimensional vector subspaces of $\bal_r$ satisfying $\bal_r = [\Lambda, \bal_{r+1}] \oplus V_r$, $r=0, \dots, n-1$.

Since $[\Lambda, \bal_{r+1}]$ is given by diagonal matrices in $\bal_r$ whose entries add up to zero, there are many such choices, and one of them is the one displayed in (\ref{K}). 
 \end{proof}
Straight calculations show that in the $\RP^3$ case ($n=3$) $k_2 = -\ka_2$, $k_1 = -\ka_1-\ka_2'$ and $k_0=-\ka_0-\ka_1'$, while
\begin{equation}\label{g3}
g = \begin{pmatrix} 1&0&k_2&k_1-k_2'\\ 0&1&0&k_2\\ 0&0&1&0\\0&0&0&1\end{pmatrix}.
\end{equation}
In the $\RP^4$ case $k_3 =-\ka_3$, $k_2 = -\ka_2-3\ka_3'$, $k_1= -\ka_1-2\ka_2'-3\ka_3''$ and $k_0 = -\ka_0-\ka_1'-\ka_2''-\ka_3'''-\ka_3\ka_3'$, while
\begin{equation}\label{g4}
g = \begin{pmatrix}1&0&k_3&k_2-2k_3'&k_1-k_2'-k_3''\\ 0&1&0&k_3&k_2-k_3'\\0&0&1&0&k_3\\0&0&0&1&0\\0&0&0&0&1\end{pmatrix}
\end{equation}.

One can also easily see that the first nonzero upper diagonal of $g$ is always given by $k_{m-1}$ for any dimension $m$.

\begin{theorem}\label{projreal} (\cite{M2}, \cite{M3}) Let $\hat\rho$ be the Wilczynski moving frame, and let $\rho =\hat\rho g$ be the moving frame associated to $K$, where $g$ and $K$ are given as in Lemma \ref{gauge}. Then, the invariant evolution 
\begin{equation}\label{uevpr}
 u_ t = d\Phi_\rho(o) \r 
 \end{equation}
 is the projective realization of the evolution
 \[
 \ka_t = P \r
 \]
 where $P$ is the Hamiltonian operator associated to the AGD bracket as given in \cite{A}. 
 \end{theorem}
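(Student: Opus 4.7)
The plan is to combine the Drinfeld-Sokolov description of the AGD bracket as a Hamiltonian reduction with the general realization theorem from \cite{M3}, and then to read off the operator $P$ via a zero-curvature computation along the flow.

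I would first recall that, by Drinfeld-Sokolov, the AGD bracket on $\Lo$ arises as a Poisson reduction of the (second) Kirillov-Kostant structure on the loop algebra $L\sl(n+1)^*$. The reduction gauge-fixes a loop in $\Lambda + \bal$ to a cross-section $\Lambda + q^{can}$, and different choices of $q^{can}$ give distinct but gauge-equivalent presentations of the same bracket. The Wilczynski form (\ref{wilcinv}) with invariants $k_i$ corresponds to the scalar-operator cross-section, while the form $K$ in (\ref{K}) with invariants $\ka_i$ is a second cross-section; the matrix $g$ of Lemma \ref{gauge} is exactly the gauge transformation relating the two. Hence $\rho = \hat\rho g$ has Maurer-Cartan matrix $K$ by construction, and is a moving frame adapted to this second cross-section.

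Next, I would invoke the realization theorem of \cite{M3}: any Hamiltonian flow for a reduced geometric Poisson bracket (which here is the AGD bracket) can be written as an invariant curve evolution in $\PSL(n+1)/H$ of the form (\ref{projev}), where the invariant tangent vector $\r$ is read off from the variational derivative of the Hamiltonian expressed in the coordinates $\ka_i$. Combined with Theorem \ref{invev}, this produces a pair of compatible evolutions: $u_t = d\Phi_\rho(o)\r$ on the projective side and $\ka_t = P\r$ on the invariant side, for some intrinsically defined operator $P$. To pin down $P$ explicitly, I would use zero-curvature compatibility: writing $\rho_t = \rho N$ and differentiating $\rho_x = \rho K$ in $t$ yields $K_t = N_x + [N, K]$. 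Decomposing $N = n_{\mathfrak{h}} + n_{\mathfrak{m}}$ along $\mathfrak{g} = \mathfrak{h}\oplus\mathfrak{m}$, the $\mathfrak{m}$-piece is fixed by $\r$ through $d\Phi_\rho(o)\r = u_t$, while the $\mathfrak{h}$-piece is forced by the requirement that $K_t$ remain in the chosen cross-section. The resulting system is triangular along the filtration by diagonals of $\bal$ used in the proof of Lemma \ref{gauge}, so $n_{\mathfrak{h}}$ can be solved recursively as a differential polynomial in $\r$, and reading off the entries of the first column of $K_t$ then yields $\ka_t = P\r$.

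The main obstacle is matching the operator $P$ produced by this recursion with Adler's explicit formula for the AGD Hamiltonian operator on $\Lo$. The key point is that the gauge $g$ not only intertwines the two cross-sections but, by the functoriality of the Hamiltonian reduction, also intertwines the two presentations of the AGD bracket; once this intertwining is in hand, the identification reduces to the known scalar-operator case. The low-dimensional formulas (\ref{g3}) and (\ref{g4}) provide an explicit sanity check and a template for the inductive pattern in general dimension.
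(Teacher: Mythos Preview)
The paper does not contain a proof of this theorem: it is quoted from \cite{M2} and \cite{M3}, and the surrounding text only explains how to \emph{use} the result (via the lift formula (\ref{liftev}) and the gauge $g$ of Lemma \ref{gauge}), not how to establish it. So there is no in-paper argument to compare against.

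That said, your sketch is broadly aligned with the framework the paper invokes. The Drinfeld--Sokolov reduction of the loop-algebra bracket to the AGD bracket, the gauge equivalence of the Wilczynski cross-section $\hat K$ and the cross-section $K$ via the matrix $g$ of Lemma \ref{gauge}, and the zero-curvature identity $K_t = N_x + [N,K]$ with $N$ determined by $\r$ up to a uniquely solvable $\mathfrak h$-piece, are exactly the ingredients behind the results in \cite{M2,M3}. The paper's remark that ``the choice of $K$ is determined by the invariants $\ka$ being in the dual position to the tangent to the section $\varsigma$'' is the geometric reason the $\mathfrak m$-part of $N$ is fixed by $\r$, which you phrase as $d\Phi_\rho(o)\r = u_t$.

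Where your outline is thinnest is precisely where you flag it: identifying the operator $P$ obtained from the recursion with Adler's operator. Saying that ``functoriality of Hamiltonian reduction'' transports the identification from the scalar-operator cross-section to the $\ka$-cross-section is correct in spirit but is doing real work: one must check that the gauge $g$ is a Poisson map between the two reduced brackets, not merely a change of coordinates on the invariants. In \cite{M3} this is handled by showing that the reduced bracket is independent of the choice of cross-section $q^{can}$ (essentially because two cross-sections differ by a gauge in the nilpotent group, which acts by Poisson automorphisms on the unreduced bracket and preserves the level set being reduced). If you want the argument to stand on its own, that invariance statement should be made explicit rather than absorbed into the word ``functoriality''.
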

The choice of $K$ is determined by the invariants ${\bf\ka}$ being in the dual position to the tangent to the section $\varsigma$ (see \cite{M1}). According to this theorem, in order to determine the projective realizations of AGD Hamiltonian flows, we simply need to fix the moving frame $\rho$ using $\hat\rho$ and $g$, and to find the Hamiltonian functional $\HH$ corresponding to the flow. Notice that, if $\k_t = P\delta_\kappa\HH$ for some Hamiltonian operator $\HH$, then the lift of (\ref{uevpr}) is given by
\begin{equation}\label{liftev}
\l_t = \rho\begin{pmatrix} r_0\\ \r\end{pmatrix} = \hat\rho g\begin{pmatrix}r_0 \\ \delta_\ka\HH\end{pmatrix}
\end{equation}
where $r_0$ is determined uniquely by property (\ref{norm}).

Next we recall the definition of the AGD Hamiltonian functionals.
Let 
\begin{equation}\label{Winv}
L = D^{n+1} + k_{n-1}D^{n-1}+\dots + k_1 D+k_0
\end{equation}
be a scalar differential operator, where $D= \frac d{dx}$, and assume $k_i$ are all periodic. The AGD flow is the AGD Hamiltonian evolution with Hamiltonian functional given by 
\[
\HH(\k) = \int_{S^1} \mathrm{res}(L^{\frac r{n+1}}) dx
\]
$r=2,3,4,\dots n$, where res stands for the residue (or coefficient of $D^{-1}$) of the pseudo-differential operator $L^{r/(n+1)}$. one often refers to the AGD flow as the one associated to $r = n$, but any choice of $r$ will define {\it biHamiltonian} flows (see \cite{DS}). Therefore, any choice of $r$ will result in  a completely integrable flow. For given particular cases this Hamiltonian can be found explicitly. 

\begin{proposition} The AGD Hamiltonian functional for $n=3=r$,  is given by
\[
\HH(\k) = \frac34\int_{S^1} k_0-\frac18 k_2^2 dx
\]
while the one for $n=4=r$ is given by
\[
\HH(\k) = \frac45 \int_{S^1} k_0-\frac{1}5 k_2k_3 dx.
\]
\end{proposition}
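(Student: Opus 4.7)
The plan is to compute $\text{res}(L^{r/(n+1)})$ directly by constructing $L^{1/(n+1)}$ as a formal pseudo-differential series and raising it to the $r$-th power, then using periodicity to simplify under the integral.

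First I would write $P = L^{1/(n+1)} = D + \sum_{j\ge 1} a_j D^{-j}$, noting that there is no $D^0$ term because $L$ has no $D^n$ coefficient. Imposing $P^{n+1}=L$ and expanding with the pseudo-differential Leibniz rule $D^k f = \sum_{i\ge 0}\binom{k}{i} f^{(i)} D^{k-i}$ (valid for any $k\in\mathbb{Z}$), matching the coefficient of $D^{n-j}$ on both sides determines $a_j$ recursively: the unknown $a_j$ appears linearly with a nonzero rational coefficient, while the remaining terms involve only $a_1,\dots,a_{j-1}$, the $k_i$, and their $x$-derivatives. I carry out this recursion until $a_j$ is known for $j$ large enough to control the $D^{-1}$ coefficient of $P^r$.

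Next, for $n=3,\,r=3$, I would compute $P^3 = P^2\cdot P$ and collect the coefficient of $D^{-1}$. After substituting the previously determined $a_1,a_2,a_3$, this coefficient is a polynomial in the $k_i$ and their derivatives of the form $\tfrac{3}{4}k_0 -\tfrac{3}{32}k_2^2 + (\text{total $x$-derivatives})$; integrating over $S^1$ kills the total derivatives by periodicity and gives the stated formula $\tfrac{3}{4}\int_{S^1}(k_0-\tfrac{1}{8}k_2^2)\,dx$. The case $n=4,\,r=4$ follows the same scheme: determine $a_1,\dots,a_4$ from $P^5=L$, form $P^4 = (P^2)^2$, extract the $D^{-1}$ coefficient, and integrate; the total derivatives drop out and the only surviving quadratic term in the $k_i$ combines to $-\tfrac{1}{5}k_2 k_3$ with overall constant $\tfrac{4}{5}$, giving the claim.

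The main obstacle is bookkeeping. For $n=4$ the number of pseudo-differential monomials contributing to the $D^{-1}$ coefficient of $(P^2)^2$ is significantly larger than in the $n=3$ case, and each must be propagated through two successive squarings while the $a_j$'s are themselves substituted as polynomials in the $k_i$ and their derivatives. A secondary subtlety is identifying which of the many resulting terms are total $x$-derivatives (and therefore drop out under $\int_{S^1}$) versus those that contribute essentially to $\HH$; it is precisely this reduction that collapses the raw residue expression to the clean formulas displayed in the proposition.
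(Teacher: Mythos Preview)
Your plan is correct and follows essentially the same route as the paper: construct $L^{1/(n+1)} = D + \sum_{j\ge 1}\ell_j D^{-j}$, determine the $\ell_j$ recursively from $(L^{1/(n+1)})^{n+1}=L$, raise to the $r$-th power, read off the residue, and integrate over $S^1$ using periodicity to kill total derivatives. The paper carries this out explicitly for $n=3$ (with $\ell_1=\tfrac14 k_2$, $\ell_2=\tfrac14(k_1-\tfrac32 k_2')$, $\ell_3=\tfrac14(k_0+\tfrac54 k_2''-\tfrac32 k_1'-\tfrac38 k_2^2)$) and states that $n=4$ is handled identically, so your proposal matches.
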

\begin{proof} To prove this proposition we simply need to calculate the corresponding residues. We will illustrate the case $n = 3$, and $n=4$ is identically resolved. Assume
\[
L = D^4+k_2 D^2+k_1D+k_0
\]
so that
\[
L^{1/4} = D + \ell_1 D^{-1}+\ell_2D^{-1}+\ell_3D^{-3}  + o(D^{-3}).
\]
Using the relation $L = (L^{1/4})^4$ we can find the uniquely determined coefficients to be $\ell_1 = \frac 14 k_2$, $\ell_2 = \frac14(k_1-\frac32 k_2')$ and $\ell_3 = \frac14(k_0+\frac54k_2''-\frac32 k_1'-\frac38 k_2^2)$. We also find that
\[
L^{3/4} = D^3 + 2\ell_1D+3(\ell_1'+\ell_2)+(\ell_1''+3\ell_2'+3\ell_1^2+3\ell_3)D^{-1} + o(D^{-1}).
\]
Using the periodicity of the invariants, we conclude that
\[
\HH(\k) = \int_{S^1} (\ell_1''+3\ell_2'+3\ell_1^2+3\ell_3) dx =  3 \int_{S^1} (\ell_1^2+\ell_3)dx = \frac34\int_{S^1} k_0-\frac18 k_2^2 dx.
\]
\end{proof}

Notice that, if we are to connect these Hamiltonian flows to their projective realizations, we will need to express their Hamiltonian functionals in terms of the invariants $\ka_i$. In such case $\delta_\ka \HH = \r$ for the Hamiltonian functional defining our system in the new coordinates. If we wish to write the projective realizations in terms of Wilczynski invariants, we can always revert to them once the realizations are found.

\section{The pentagram map} The pentagram map takes its name from an apparently classical result in projective geometry. If we have any given convex pentagon $\{x_1, x_2, x_3, x_4, x_5\}$, we can associate a second pentagon obtained from the first one by joining $x_i$ to $x_{i+2}$ using a segment $\overline{x_{i}x_{i+2}}$,  and defining $x_i^\ast$ to be the intersection of $\overline{x_{k-1}x_{k+1}}$ with $\overline{x_{k}x_{k+2}}$ as in  figure 1. The new pentagon $\{x_1^\ast,x_2^\ast,x_3^\ast,x_4^\ast,x_5^\ast\}$ is projectively equivalent to the first one, that is, there exists a projective transformation taking one to another. The map $T(x_i) = x_i^\ast$ is the pentagram map defined on the space of closed convex $n$-gons, denoted by $\C_n$ . If instead of a pentagon ($n=5$) we consider a hexagon ($n=6$), we need to apply the pentagram map twice to obtain a projectively equivalent hexagon. See \cite{OST} for more details.

\vskip 2ex
\centerline{\includegraphics[height=1.3in]{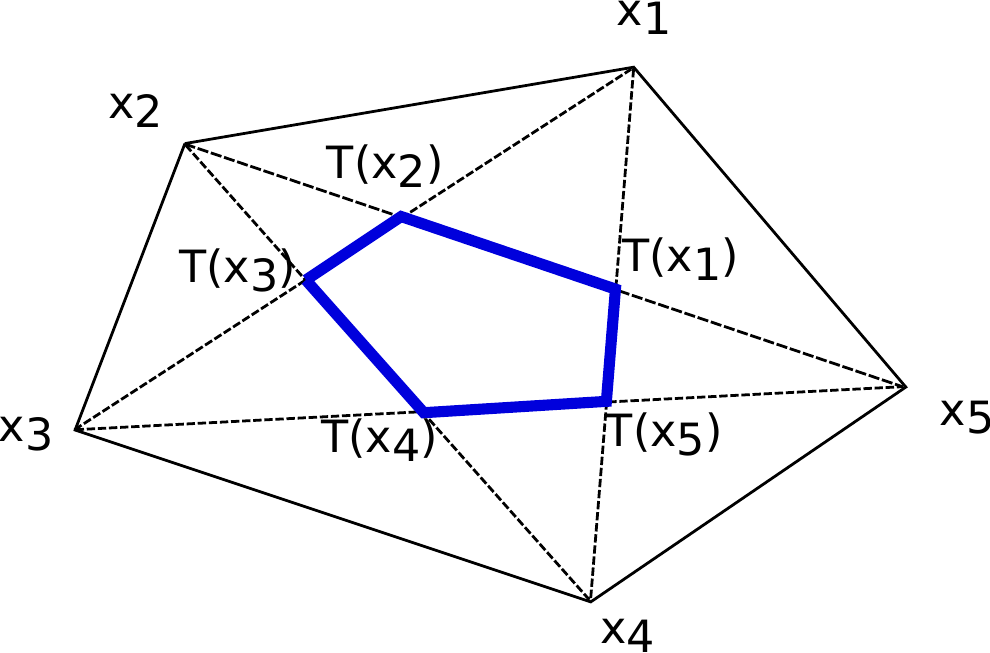}}
\vskip 1ex
\centerline{Fig. 1}
\vskip 2ex

If we consider the polygons to exist in the projective plane instead (all the constructions are transferable to the projective counterpart), we conclude that the pentagram map is the identity on $\C_5$ in $\RP^2$ and involutive on $\C_6$. This property does not hold for any $n$-gon. In fact, in general the map exhibits a quasi-periodic behavior similar to that of completely integrable systems (as shown in \cite{OST}).

The authors of \cite{OST} defined the pentagram map on the space of polygons {\it with a monodromy}. They called these {\it $n$-twisted polygons} and denoted them by $\P_n$. These are infinite polygons such that $x_{i+n} = M(x_i)$ for all $n$ and for some projective automorphism $M$ of $\RP^2$. They proved that, defined on this space and when written in terms of projective invariants of $\P_n$, the pentagram map was completely integrable in the Arnold-Liouville sense. They also proved that, again when written in terms of projective invariants, the pentagram map was a discretization of {\it the projective realization} of the Boussinesq equation, a well-known completely integrable PDE. Furthermore, their calculations show that $T$ is a discretization of the projective realization of the Boussinesq equation, as previously described; in fact, it shows that a certain unique lift of $T$ to $\R^3$ is a discretization of the unique lift of the projective realization given in the previous section.

The pentagram map is not the only one that is a planar discretization of the Boussinesq evolution. In fact, other combinations of segments also are. For example, instead of the pentagram map, consider the following map: $T:\P_m \to \P_m$ where $T(x_n) = \overline{x_{n-2}x_{n+1}}\cap \overline{x_{n-1}x_{n+2}}$. Clearly this map coincides with the pentagram map when defined over pentagons, and it will be degenerate over $\C_{2r}$ for any $r$.

\begin{proposition}\label{syst2} The map $T$ is also a discretization of the projective realization of Boussinesq's equation.
\end{proposition}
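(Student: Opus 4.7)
The plan is to mimic, for the new map, the continuous-limit calculation that Ovsienko--Schwartz--Tabachnikov carried out for the pentagram map in \cite{OST}. Introduce a small parameter $\e$ and view the vertices of an $n$-twisted polygon as samples $x_k=\gamma(x+k\e)$ of a smooth curve $\gamma:\R\to\RP^2$ with monodromy. Use the normalization (\ref{norm}) to lift $\gamma$ uniquely to $\l:\R\to\R^3$ with $\det(\l,\l',\l'')=1$. Since the image point $T(x_n)$ lies on the two lines $\overline{x_{n-2}x_{n+1}}$ and $\overline{x_{n-1}x_{n+2}}$, its lift $\l^\ast$ lies in the intersection of the two $2$-planes
\[
\mathrm{span}\{\l(x-2\e),\l(x+\e)\}\cap \mathrm{span}\{\l(x-\e),\l(x+2\e)\}.
\]

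First I would write $\l^\ast=\al\,\l(x-2\e)+\be\,\l(x+\e)=\g\,\l(x-\e)+\d\,\l(x+2\e)$ and Taylor-expand each term in $\e$. Because $\{\l,\l',\l''\}$ is a basis of $\R^3$ along the curve (by the normalization), and because higher derivatives $\l^{(k)}$ can be re-expressed in this basis using (\ref{Wilc}), the intersection equation becomes a linear system in $(\al,\be,\g,\d)$ whose coefficients are polynomial in $\e$ with entries in the ring generated by the Wilczynski invariants $k_0,k_1$ and their derivatives. Solving this system order by order in $\e$ determines the coefficients up to an overall scale, which I would then fix by imposing that $\l^\ast$ itself satisfy the normalization (\ref{norm}) along the evolving curve.

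Second, I would extract the evolution $\l_t$ by setting $\l^\ast-\l\sim \e^N\l_t+o(\e^N)$ for the appropriate slow time $t=\e^N$, and write
\[
\l_t=r_2\l''+r_1\l'+r_0\l,
\]
with $r_0$ determined by (\ref{norm}). The continuous limit of Proposition~\ref{syst2} is then the claim that the leading coefficients $r_1,r_2$ (and hence $r_0$) coincide, up to an overall constant rescaling of $t$, with those obtained by OST for the pentagram map, so that one gets the same projective realization of Boussinesq described at the end of Section~2. A useful shortcut is the reflection symmetry $k\mapsto -1-k$ about the midpoint $x+\tfrac12\e$ built into the new choice of diagonals: it forces cancellation of many terms in the expansion, and in particular guarantees that the lowest nontrivial contribution sits at the same order in $\e$ as for the pentagram map.

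The main obstacle I anticipate is purely bookkeeping: one must carry the Taylor expansion to a high enough order (through $\e^3$ in the lift, effectively $\e^2$ in the projective evolution) to read off the Boussinesq combination of $k_0,k_1$, while simultaneously back-substituting via (\ref{Wilc}) to keep the answer inside the three-dimensional frame. Once the leading $\l_t$ is written in the form $(\l',\l'')\r+r_0\l$, comparing $\r$ with the Hamiltonian vector field $\d_\ka\HH$ for $\HH=\tfrac34\int(k_0-\tfrac18 k_2^2)dx$ (specialized to $n=2$) via Theorem~\ref{projreal} yields the identification with the projective Boussinesq flow, concluding the argument.
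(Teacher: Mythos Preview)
Your plan is essentially the paper's own proof: lift to $\R^3$ via the normalization, write $\l_\e$ as a combination in each of the two $2$-planes, Taylor-expand in the frame $\{\l,\l',\l''\}$, solve the resulting linear system order by order in $\e$, and then use $\det(\l_\e,\l_\e',\l_\e'')=1$ to fix the remaining coefficients, obtaining $\l_\e=\l+\e^2(\l''-\tfrac{2}{3}k_0\l)+o(\e^2)$, which is the known lift of the projective Boussinesq realization.

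Two small slips to correct before you carry this out. First, the reflection that swaps the two diagonals $\{-2,1\}\leftrightarrow\{-1,2\}$ is $k\mapsto -k$ (about $x$), not $k\mapsto -1-k$; this is indeed the symmetry that kills the odd-order terms. Second, the Hamiltonian you quote at the end, $\tfrac34\int(k_0-\tfrac18 k_2^2)\,dx$, is the $\RP^3$ functional (note the $k_2$); for $\RP^2$ the relevant one is $\int_{S^1}\mathrm{res}(L^{2/3})\,dx$ with $L=D^3+k_1D+k_0$. The paper in fact skips this last comparison and simply cites that $\l_t=\l''-\tfrac{2}{3}k_0\l$ is already known to be the Boussinesq lift, so you need not recompute the variational derivative.
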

\begin{proof} Although the result is intuitive and the proof can perhaps be done in a simpler form as in \cite{OST}, the following process will be carried out in higher dimensions and it is perhaps simpler when illustrated in dimension 2. Instead of working with the projective realization of the Boussinesq equation, we will prove that a unique lifting of the projective realization to $\R^3$ is the continuous limit of the corresponding  unique lifting of the  map $T$ to $\R^3$. Thus, we move from the integrable system to its projective realization and from there to its unique lift to find continuous limits at that level. 

Assume $\Gamma(x)$ is a continuous map on $\R^3$ and assume $\det(\Gamma, \Gamma', \Gamma'') = 1$ so that $\Gamma$ can be considered as the unique lift of a projective curve $\gamma$, as in \cite{OST}.  Let $x_{n+k} = \gamma(x+k\epsilon)$ and assume $\gamma_\epsilon$ is the continuous limit of the map $T$ above (as in \cite{OST} we are discretizing both $t$ and $x$). Denote by $\Gamma_\e$ the unique lift of $\gamma_\e$ to $\R^3$ as in the previous section and assume further than 
\[
\Gamma_\epsilon = \Gamma + \epsilon A+\epsilon^2 B + \epsilon^3 C +o(\epsilon^3)
\]
where $A = \sum_{i=0}^2 \alpha_i \Gamma^{(i)}$, $B = \sum_{i=0}^2 \beta_i \Gamma^{(i)}$ and $C =\sum_{i=0}^2 \gamma_i\Gamma^{(i)}$. Then, the definition of $T$ assumes that $\Gamma_{\epsilon}$ lies in the intersection of both the plane generated by $\Gamma(x-\epsilon)$ and $\Gamma(x+2\epsilon)$ and the one generated by $\Gamma(x-2\epsilon)$ and $\Gamma(x+\epsilon)$. If $\Gamma_\epsilon = a_1 \Gamma(x-\epsilon) + a_2 \Gamma(x+2\epsilon) = b_1 \Gamma(x-2\epsilon)+b_2\Gamma(x+\epsilon)$ for some functions $a_i(x, \epsilon), b_i(x, \epsilon)$, $i=1,2$, then equating the coefficients of $\Gamma, \Gamma'$ and $\Gamma''$ we obtain the relations
\begin{eqnarray*}\label{2d}
1+ \alpha_0\epsilon +  \beta_0\epsilon^2 + o(e^2) &=& a_1+a_2 + o(\epsilon^3) = b_1+b_2+o(\epsilon^3)\\
 \alpha_1\epsilon + \beta_1 \epsilon^2 + o(\epsilon^2) &=& (-a_1+2 a_2)\epsilon + o(\epsilon^3) = (-2b_1+b_2)\epsilon +o(\epsilon^3)\\
\alpha_2\epsilon + \beta_2\epsilon^2 +\gamma_2 \epsilon^3 +o(\epsilon^3) &=& (\frac12 a_1+2a_2)\epsilon^2 +o(\epsilon^3) =  (2b_1+\frac12b_2) \epsilon^2 + o(\epsilon^3).
\end{eqnarray*}
Here we use the fact that $\Gamma'''$ is a combination on $\Gamma$ and $\Gamma'$ according to the normalization $\det(\Gamma, \Gamma', \Gamma'') = 1$ to conclude that the remaining terms are at least $o(\epsilon^3)$. We obtain immediately $\alpha_2 = 0$. 

Let us denote $\a = (a_1, a_2)^T$ and $\b = (b_1, b_2)^T$, and $\a = \sum \a_i \epsilon^i$, $\b = \sum \b_i \epsilon^i$. Then we have the following relations
\[
 \begin{pmatrix}1&1\\-1&2\end{pmatrix} \a_0 = \begin{pmatrix}1&1\\-2&1\end{pmatrix}\b_0 = \begin{pmatrix} 1\\ \alpha_1\end{pmatrix};
\hskip 1ex
 \begin{pmatrix}1&1\\-1&2\end{pmatrix} \a_1 = \begin{pmatrix}1&1\\-2&1\end{pmatrix}\b_1 = \begin{pmatrix} \alpha_0\\ \beta_1\end{pmatrix}
\]
and also the extra conditions
\begin{equation}\label{extra}
\begin{pmatrix}\frac12&2\end{pmatrix}\a_0 = \begin{pmatrix}2&\frac12\end{pmatrix}\b_0 = \beta_2; \hskip 1ex \begin{pmatrix}\frac12&2\end{pmatrix}\a_1 = \begin{pmatrix}2&\frac12\end{pmatrix}\b_1 = \gamma_2.
\end{equation}
Solving for $\a_i$ and $\b_i$ and substituting in the first condition in (\ref{extra}) we get 
\[
\beta_2 = 1 + \frac12 \alpha_1 = 1-\frac12\alpha_1
\]
which implies $\beta_2 = 1$ and $\alpha_1 =0$. 

Finally, substituting in the second condition in (\ref{extra}) we get
\[
\gamma_2 = \alpha_0 + \frac12 \beta_1 = \alpha_0 - \frac12 \beta_1
\]
which results in $\gamma_2 = \alpha_0$ and $\beta_1 = 0$. The final conclusion comes from imposing the lifting condition $\det(\Gamma_\epsilon, \Gamma_{\epsilon}',
\Gamma_\epsilon'') = 1$ to our continuous limit. When expanded in terms of $\epsilon$, and after substituting $A = \alpha_0 \Gamma$, $B = \Gamma'' + \beta_0\Gamma$ we obtain  
\[
1 = 1+3\alpha_0\epsilon + \left(3\alpha_0^2 + 3\beta_0 + \det(\Gamma, \Gamma',\Gamma^{(4)}) +\det(\Gamma, \Gamma''', \Gamma'')\right)\epsilon^2 + o(\epsilon^2).
\]
Using the fact that $\det(\Gamma, \Gamma',\Gamma^{(4)}) +\det(\Gamma, \Gamma'', \Gamma''') = 0$, and the Wilczynski relation $\Gamma''' = -k_1\Gamma' - k_0\Gamma$,
we obtain $\alpha_0 = 0$ and $\beta_0 = -\frac23 k_0$. From here, $\Gamma_e = \Gamma + \epsilon^2 (\Gamma'' -\frac23 k_0 \Gamma)+o(\epsilon^2)$. The result of the theorem is now immediate: it is known (\cite{M2}) that the evolution $ \Gamma_t = \Gamma'' -\frac23 k_0 \Gamma$ is the lifting to $\R^3$ of the projective geometric realization for the Boussinesq equation. (One can also see this limit for the pentagram map in \cite{OST}.)
\end{proof}

\section{Completely Integrable Generalizations of the pentagram map}

\subsection{Discretizations of an $n$-dimensional completely integrable system with a second order projective realization}

In this section we will describe discrete maps defined on $\P_r\subset \RP^{m}$ for which the continuous limit is given by a second order projective realization of a completely integrable PDE. As we described before, we will work with the lift  of the projective flows. Assume $m\ge 2$.

Let $\{x_n\}\in \P_r$. Define $\D_m$ to be a $m-1$ linear subspace determined uniquely by the points $x_n$ and $x_{n+k_i}$ $i=1,\dots m-1$, where $k_i$ are all different from each other and different from $\pm 1$. For example, if $m = 2s-1$, we can choose $x_{n-s}, x_{n-s+1}, \dots, x_{n-2}, x_n, x_{n+2}, \dots, x_{n+s}$. Assume that for every $n$ this subspace intersects the segment $\overline{x_{n-1} x_{n+1}}$ at one point. We denote the intersection $T(x_n)$ and define this way a map $T:\P_r \to \P_r$. The example $k_1 = -2$, $k_2 = 2$ for the case $m = 3$ is shown in figure 2.
\vskip 2ex
\centerline{\includegraphics[height=1.3in]{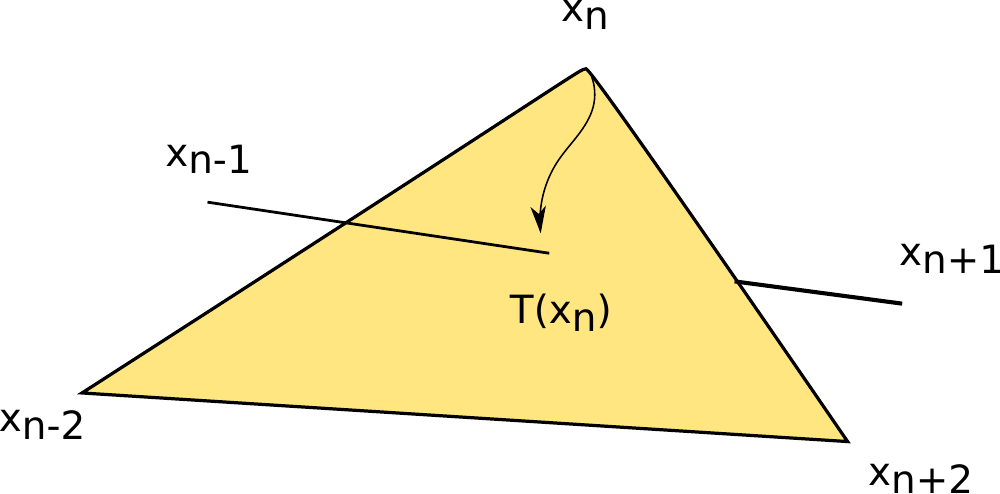}}
\vskip 1ex
\centerline{Fig. 2}
\vskip 2ex
Let $\gamma: \R \to \RP^{m}$ and let $\Gamma: \R \to \R^{m+1}$ be the unique lift of $\gamma$, as usual with the normalization condition 
(\ref{norm}). 
 Following the notation in \cite{OST} we call as before $\gamma(x+k\epsilon) = x_{n+k}$ and denote by $\gamma_\epsilon$ the continuous limit of the map $T$. We denote by $\Gamma_\e$ its lifting to $\R^{m+1}$.

\begin{theorem}\label{basictheorem} The lift to $\R^{m+1}$ of the map $T$ is given by $\Gamma_\epsilon = \Gamma + \frac12\epsilon^2 (\Gamma'' -\frac {2}{m+1} k_{m-1}\Gamma)+o(\epsilon^2)$. 
\end{theorem}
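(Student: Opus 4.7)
The plan is to mirror the strategy of Proposition \ref{syst2}: write $\Gamma_\epsilon = \Gamma + \epsilon A + \epsilon^2 B + o(\epsilon^2)$ and determine $A$ and $B$ from three sets of constraints---the segment condition $\Gamma_\epsilon = a_1(\epsilon)\Gamma(x-\epsilon) + a_2(\epsilon)\Gamma(x+\epsilon)$, the subspace condition $\Gamma_\epsilon = c_0(\epsilon)\Gamma(x) + \sum_{i=1}^{m-1} c_i(\epsilon)\Gamma(x+k_i\epsilon)$, and the normalization $\det(\Gamma_\epsilon,\Gamma_\epsilon',\ldots,\Gamma_\epsilon^{(m)}) = 1$. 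Each of these is exploited after Taylor expansion in the basis $\{\Gamma^{(j)}\}$ and matching coefficients order-by-order in $\epsilon$.

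From the segment parameterization, the coefficient of $\Gamma^{(j)}$ in $\Gamma_\epsilon$ is $\frac{\epsilon^j}{j!}\bigl((-1)^j a_1(\epsilon)+a_2(\epsilon)\bigr)$. Matching against $\Gamma + \epsilon A + \epsilon^2 B$ at orders $\epsilon^0,\epsilon^1,\epsilon^2$ forces $\alpha_j=0$ for $j\ge 2$, $\beta_j=0$ for $j\ge 3$, and $\beta_2=\tfrac12$. So I may write $A=\alpha_0\Gamma+\alpha_1\Gamma'$ and $B=\beta_0\Gamma+\beta_1\Gamma'+\tfrac12\Gamma''$. Equating with the subspace parameterization at each order yields linear systems in the $c_{i,r}$; in particular the equations $\sum_i c_{i,0}k_i^{2r}=1$ for even exponents and $\sum_i c_{i,0}k_i^{2r+1}=\alpha_1$ for odd exponents must hold for $1\le 2r,\,2r+1\le m$. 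For the intended symmetric configurations (e.g.\ $k_i\in\{\pm 2,\ldots,\pm s\}$ for $m=2s-1$, as in the paper's example) all odd power sums vanish identically, forcing $\alpha_1 = 0$; the analogous argument one order higher gives $\beta_1 = 0$.

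The last step is the normalization. With $A = \alpha_0\Gamma$ and $B = \beta_0\Gamma + \tfrac12\Gamma''$ in hand, expand $\det(\Gamma_\epsilon,\ldots,\Gamma_\epsilon^{(m)})$ multilinearly. The $\epsilon^1$-coefficient reduces to $(m+1)\alpha_0$, so $\alpha_0 = 0$. The $\epsilon^2$-coefficient is a sum of $m+1$ contributions, one per column: replacing the $j$-th column by $B^{(j)}$ picks out the coefficient of $\Gamma^{(j)}$ inside $B^{(j)}$. For $j\le m-2$ this is just $\beta_0$; for $j = m-1$ and $j = m$, however, the term $\tfrac12\Gamma^{(j+2)}$ in $B^{(j)}$ must be reduced via the Wilczynski relation $\Gamma^{(m+1)}=-k_{m-1}\Gamma^{(m-1)}-\cdots$, so each of those two columns picks up an additional $k_{m-1}$-piece. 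Summing the $m+1$ terms gives a single scalar equation in $\beta_0$ and $k_{m-1}$ whose solution is $\beta_0 = -\tfrac{1}{m+1}k_{m-1}$, yielding the formula claimed.

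The hard part is this last step: each column has to be tracked separately, the Wilczynski reductions at columns $m-1$ and $m$ must be carried out with the correct signs and coefficients, and one must check that cross terms (quadratic in $A$, or involving two different $B^{(j)}$ columns) vanish to order $\epsilon^2$. The identity $\det(\Gamma,\Gamma',\Gamma^{(4)}) + \det(\Gamma,\Gamma'',\Gamma''')=0$ used in Proposition \ref{syst2}, and its higher-dimensional analogues obtained by differentiating $\det(\Gamma,\Gamma',\ldots,\Gamma^{(m)})=1$, organize this calculation. A secondary subtlety is the derivation of $\alpha_1=\beta_1=0$ in the previous step, where the symmetry assumption on $\{k_i\}$---implicit in the setup---has to be used in an essential way.
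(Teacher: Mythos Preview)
Your three-step strategy (segment condition, hyperplane condition, normalization) is exactly the paper's; the only cosmetic difference is that the paper encodes the hyperplane membership as the single scalar equation
\[
\det\bigl(\Gamma_\epsilon,\Gamma,\Gamma(x+k_1\epsilon),\ldots,\Gamma(x+k_{m-1}\epsilon)\bigr)=0
\]
rather than introducing auxiliary coefficients $c_i$.

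Where you part ways is on the step $\alpha_1=\beta_1=0$, and here your caution is warranted and the paper's argument is actually in error. The paper asserts that the $\epsilon^{m(m+1)/2}$ coefficient of this determinant is $X\alpha_1$ with $X$ a nonzero Vandermonde, on the grounds that ``the term involving $\Gamma'$ will come from the $\Gamma_\epsilon$ expansion since this is the term with the lowest power of $\epsilon$.'' But $\Gamma'$ also sits in every column $\Gamma(x+k_i\epsilon)$ at order $\epsilon^1$, so there are further contributions at the \emph{same} total order---for instance, $\Gamma_\epsilon$ supplying its $\tfrac12\epsilon^2\Gamma''$ term while one of the other columns supplies $k_i\epsilon\Gamma'$. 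These do not cancel in general: for $m=3$ with $(k_1,k_2)=(2,3)$ your own equations give $\alpha_1=5/7$. So the paper's claim that the result holds ``for any different choice of the vertices'' is false; the theorem needs the symmetry $\{k_i\}=\{-k_i\}$ you invoke, and your ``secondary subtlety'' is really the crux.

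That said, tighten your argument for the symmetric case. ``Odd power sums vanish'' is not literally what is needed, since the relevant sums are the weighted ones $\sum_i c_{i,0}k_i^{2r+1}$, not $\sum_i k_i^{2r+1}$. A clean route: when $\{k_i\}=\{-k_i\}$ the entire construction is invariant under $x_{n+r}\leftrightarrow x_{n-r}$, so the normalized lift satisfies $\Gamma_{-\epsilon}=\Gamma_\epsilon$, forcing $A=0$ immediately. For $\beta_1$ you still need one more step: at the next order the odd-$j$ equations read $\sum_i c_{i,1}k_i^j=\beta_1$ for every odd $j\le m$; pairing $k_i$ with $-k_i$ reduces this to an overdetermined Vandermonde system in the $k_i^2$, whose only solution (using $|k_i|\ne1$) is $\beta_1=0$. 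Your normalization computation for $\beta_0$ is fine.
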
 
\begin{proof}
As in the case $n=2$, assume 
\begin{equation}\label{gexp}
\le = \l + \e A + \e^2B+\e^3 C + o(\e^3),
\end{equation}
 and assume further that $A, B$ and $C$ decompose as combinations of $\l^{(k)}$ as 
\[
A = \sum_{r=0}^{m}\al_{r}\l^{(r)}, \hskip 2ex B = \sum_{r=0}^{m}\be_{r}\l^{(r)},\hskip 2ex C = \sum_{r=0}^{m}\mu_{r}\l^{(r)}
\]
Since $T(x_n)$ belongs to the segment $\overline{x_{n-1} x_{n+1}}$, we conclude that the line through the origin representing $\le$ belongs to the plane through the origin generated by $\l(x-\e)$ and $\l(x+\e)$. That is, there exist functions $a$ and $b$ such that 
\[
\le = a \l(x-\e)+b\l(x+\e).
\]
Putting this relation together with the decomposition of $\le$ according to $\e$, we obtain the following $m+1$ equations relating $\al$, $\be$, $a$ and $b$
\begin{eqnarray*}
a+b &=& 1+\al_0\e+\be_0\e^2 +o(\e^2),\\ (a+b)\e^{2r} &=& (2r)! (\al_{2r}\e+\be_{2r}\e^2 +o(\e^2)),\\ (b-a) \e^{2r-1} &=& (2r-1)!(\al_{2r-1}\e+\be_{2r-1}\e^2 +o(\e^2))
\end{eqnarray*}
 $r = 1,\dots, \frac m2$ if $m$ is even, or $r = 0, \dots,\frac{m+1}2$ if $m$ is odd. Directly from these equations we get $\al_r = 0$ for any $r = 2, \dots m$ and $\be_r = 0$ for any $r=3,\dots,m$. We also get $\be_2 = \frac12$, that is, $A = \al_0\l+\al_1\l'$ and $B = \be_0\l+\be_1\l'+\frac12\l''$. The remaining relations involve higher order terms of $C$ and other terms that are not relevant.

Since $T(x_n)$ belongs also to the subspace $\D_m$, $\le$ belongs to the $m$-dim subspace of $\R^{m+1}$ generated by $\l$ and $\l(x+m_i\e)$, $i = 1,\dots,m-1$. That is
\begin{equation}\label{ep1}
\det(\le, \l, \l(x+m_1\e),\dots, \l(x+m_{m-1}\e)) = 0.
\end{equation}
We now expand in $\e$ and select the two lowest powers of $\e$ appearing. These are $\e^{1+\dots +m} = \e^{\frac12m(m+1)}$ and $\e^{\frac12m(m+1)+1}$. They appear as coefficients of $\l^{(r)}$, $r = 0,\dots, m$ situated in the different positions in the determinant. The term involving $\l'$ will come from the $\le$ expansion since this is the term with the lowest power of $\e$.

With all this in mind we obtain that the coefficient of $\e^{\frac12m(m+1)}$ is given by
\[
 X \al_1
\]
for some factor $X$ that we still need to identify. In fact, we only need to know that $X\ne 0$ to conclude that (\ref{ep1}) implies $\al_1 = 0$. 

The factor $X$ corresponds to the coefficient of $\det(\l,\l',\dots,\l^{(m)})$ when we expand
\[
\det\left(\l',\l, \sum_{r=2}^{m}\frac{m_1^{r}}{r!}\l^{(r)},\dots,\sum_{r=2}^{m} \frac{m_{m-1}^{r}}{r!}\l^{(r)}\right).
\]
When one looks at it this way it is clear that $X$ is the determinant of the coefficients in the basis $\{\Gamma, \Gamma',\dots,\Gamma^{(m)}\}$; that is, the determinant of the matrix
\[
\begin{pmatrix} 0&1&0&0&\dots&0\\1&0&0&\dots&0\\ 0&0&\frac{m_1^2}{2!}&\frac{m_1^3}{3!}&\dots&\frac{m_1^{m}}{m!}\\ \dots&\dots&\dots&\dots&\dots&\dots\\ 0&0&\frac{m_{m-1}^2}{2!}&\frac{m_{m-1}^3}{3!}&\dots&\frac{m_{m-1}^{m}}{m!}\end{pmatrix}.
\]
Using expansion and factoring $m_i^2$ from each row, one can readily see that this determinant is a nonzero multiple of the determinant of the matrix 
\[
\begin{pmatrix} 1 & m_1 &m_1^2&\dots &m_1^{m-2}\\ 1&m_2&m_2^2&\dots&m_2^{m-2}\\ \vdots&\vdots&\vdots&\dots&\vdots\\ 1&m_{m-1}&m_{m-1}^2&\dots&m_{m-1}^{m-2}\end{pmatrix}.
\]
This is a Vandermonde matrix with nonzero determinant whenever $m_i\ne m_j$, for all $i\ne j$. Therefore $X\ne 0$ and $\al_1 = 0$. 
Using the normalization condition (\ref{norm}) for $\le$, we obtain
\[
1 = \det(\le,\le',\dots,\le^{(m)}) = 1 + \e(\det(\al_0\l,\l',\dots,\l^{(m)})
\]
\[+\dots+\det(\l,\l',\dots,\al_0\l^{(m)}))+o(\e) = 1+(m+1)\e\al_0+o(\e)
\]
and therefore $\al_0=0$ and $A = 0$.

Finally, the coefficient of $\e^{\frac12m(m+1)+1}$ in (\ref{ep1}) is given by 
\[
X\be_1 + \frac12 Y
\]
where $Y$ is a sum of terms that are multiples of only one determinant, namely $\det(\l,\l',\l'',\dots,\l^{(m-1)},\l^{(m+1)})$ (since $\l''$ carries $\e^2$  in $B$, one of the derivatives in the remaining vectors needs to be one order higher to obtain $\e^{\frac12m(m+1)+1}$. This determines $Y$ uniquely). This determinant is the derivative of $\det(\l,\l',\dots,\l^{(m)}) = 1$ and hence zero. Therefore, since $X\ne 0$ be also obtain $\be_1 = 0$. This result will be true for {\it any different choice of the vertices} when constructing $\D_m$, as far as $x_n$ is belongs to $\D_m$ and our choice is non-singular, that is, as far as $\D_m$ intersects $\overline{x_{n-1} x_{n+1}}$.

Using (\ref{norm}) for $\le$ again, together with $B = \frac12 \l'' + \be_0\l$ and $\l^{(m+1)} = \sum_{r=0}^{m-1} k_r \l^{(r)}$, results in the relation
\[
(m+1) \be_0 + \frac12 2 k_{m-1} = 0
\]
and from here we obtain the continuous limit as stated in the theorem.\end{proof}

One can check that the segment $\overline{x_{n-1}x_{n+1}}$ can be substituted for any choice of the form $\overline{x_{n-r}x_{n+r}}$ to obtain the continuous limit $\le = \l + r!(\Gamma'' -\frac 2{m+1} k_{m-1}\Gamma)\e^2 + o(\e^2)$ instead. Any segment choice of the form $\overline{x_{n+r}x_{n+s}}$, $s\ne -r$, will give us different evolutions, most of which (although perhaps not all) will be non integrable.

\begin{theorem} The invariantization of the projective evolution corresponding to the lifted curve evolution $\Gamma_t = \Gamma'' -\frac 2{m+1} k_{m-1}\Gamma$ is a completely integrable system in the invariants $\k$. 
\end{theorem}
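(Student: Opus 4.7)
The plan is to identify the invariantization with the AGD Hamiltonian flow for $\HH(L) = \int_{S^1} \mathrm{res}(L^{2/(m+1)})\,dx$, the second flow in the AGD hierarchy, and then invoke the biHamiltonian structure of the AGD flows \cite{DS} to conclude complete integrability. By Theorem \ref{projreal} together with \eqref{liftev}, the projective realization of $\ka_t = P\,\delta_\ka \HH$ lifts to $\l_t = \hat\rho\, g\begin{pmatrix} r_0 \\ \delta_\ka \HH \end{pmatrix}$, so it suffices to match the target $\l_t = \l'' - \tfrac{2}{m+1}k_{m-1}\l$ with an appropriate $\HH$.

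Using Lemma \ref{gauge} and the observation in the excerpt that the first nonzero superdiagonal of $g$ is always $k_{m-1}$, the first three columns of $\hat\rho g$ are $\l$, $\l'$, and $\l'' + k_{m-1}\l$. Matching coefficients in the target flow then forces $\delta_\ka \HH$ to be a scalar multiple of the second standard basis vector $e_2$, with $r_0$ a corresponding multiple of $k_{m-1}$ fixed uniquely by the normalization $\det(\l,\l',\dots,\l^{(m)})=1$; consistency is automatic here since our flow arose as the continuous limit of the normalized lift of $T$. The theorem therefore reduces to showing that the variational derivative of the second AGD Hamiltonian is proportional to $e_2$ in the $\ka$-variables.

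To check this, I would expand $L^{1/(m+1)} = D + a_1 D^{-1} + a_2 D^{-2}+\cdots$ and match $(L^{1/(m+1)})^{m+1} = L$ at the top two powers of $D$ to get $a_1 = k_{m-1}/(m+1)$ and $a_2 = k_{m-2}/(m+1) - \tfrac{m}{2(m+1)}k_{m-1}'$. A short pseudo-differential computation then yields $\mathrm{res}(L^{2/(m+1)}) = a_1' + 2 a_2$, so modulo exact derivatives $\HH$ equals a nonzero multiple of $\int k_{m-2}\,dx$. Converting to the $\ka$-variables via Lemma \ref{gauge} (which, as illustrated by the displayed $\RP^3$ and $\RP^4$ formulas, gives $k_{m-2}$ as $-\ka_{m-2}$ plus derivatives of $\ka_{m-1}$ that integrate to zero), $\HH$ becomes proportional to $\int \ka_{m-2}\,dx$, whose variational derivative in the ordering $(\ka_{m-1},\ka_{m-2},\ldots,\ka_0)$ inherited from the top row of $K$ is precisely $e_2$. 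The main obstacle is the bookkeeping of numerical constants and indexing conventions in the residue computation and in the change of variables between the $k_i$ and $\ka_i$; once this identification is secured, the biHamiltonian structure of the AGD hierarchy \cite{DS} produces an infinite family of commuting conserved quantities, giving complete integrability.
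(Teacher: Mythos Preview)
Your proposal is correct and follows essentially the same approach as the paper: compute $\mathrm{res}(L^{2/(m+1)})$ to find $\HH\propto\int k_{m-2}\,dx$, verify that $\delta_\ka\HH$ is a multiple of $e_2$, and then invoke the biHamiltonian structure of the AGD hierarchy from \cite{DS}. The only cosmetic difference is that the paper carries out the change of variables via the formula $\delta_\ka\HH=(\delta\k/\delta\ka)^\ast\delta_k\HH$ together with the lower-triangular structure of the Jacobian, while you substitute $k_{m-2}=-\ka_{m-2}+(\text{total derivatives})$ directly under the integral; the two computations are equivalent.
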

\begin{proof} 

The proof of this theorem is a direct consequence of section \ref{AGD}. Indeed, if 
\[
L= D^{n}+k_{n-2}D^{n-2}+\dots+ k_1D+k_0
\]
and if 
\[
L^{1/{n}} = D + \ell_1D^{-1} + \ell_2D^{-2}  +o(D^{-2})
\]
then, 
\[
L^{2/n} = D^2 + 2\ell_1+(\ell_1'+2\ell_2)D^{-1}+ o(D^{-1})
\]
 and from here the Hamiltonian $\HH(L) = \int_{S^1} \mathrm{res}(L^{2/n}) dx$ is $\HH(L) = 2\int_{S^1} \ell_2 dx$. As before, we can use the relation $(L^{1/n})^n = L$ to find the value for $\ell_2$. Indeed, a short induction shows that
 \[
 L^{k/n} = D^k+k\ell_1D^{k-2}+\left(k\ell_2+\binom{k}2\ell_1'\right)D^{k-3}+o(D^{k-3})
 \]
 which implies $\ell_1 = \frac1n k_{n-2}$ and $\ell_2 = \frac1n k_{n-3}-\binom n2\frac1{n^2} k_{n-2}'$. This implies
 \[
 \HH(L) = \int_{S^1} \mathrm{res}(L^{2/n}) dx =\frac2n \int_{S^1} k_{n-3} dx
 \]
 with an associated variational derivative given by 
 \begin{equation}\label{HL}
 \delta_k \HH = \frac 2ne_{2}.
 \end{equation} 
 It is known (\cite{DS}) that Hamiltonian evolutions corresponding to Hamiltonian functionals $\HH(L) = \int_{S^1} \mathrm{res}(L^{k/n}) dx$, {\it for any $k$}, are biHamiltonian systems and completely integrable in the Liouville sense. Denote by $\delta_{\ka}\HH$ the corresponding variational derivative of $\HH$, with respect to $\ka$.
 
Recall than from (\ref{liftev}), this Hamiltonian evolution has an $\RP^{n-1}$ projective realization  that lifts to the evolution
\[
\l_t = \begin{pmatrix}\l&\l'&\l''&\dots, \l^{(n-1)}\end{pmatrix} g\begin{pmatrix} r_0\\ \delta_\ka \HH\end{pmatrix}
\]
where $g$ is given as in lemma \ref{gauge} and where $r_0$ is uniquely determined by the normalization condition (\ref{norm}) for the flow. 
Recall also that according to the comments we made after lemma \ref{gauge}, the upper diagonal of $g$ is given by the entry $k_{n-2}$. That means the lift can be written as
\[
\l_t = \begin{pmatrix}\l&\l'&\l''+k_{n-2}\l&\dots\end{pmatrix} \begin{pmatrix} r_0\\ \delta_\ka \HH\end{pmatrix}.
\]
The normalization condition  imposed here to find $r_0$ is the exact same condition we imposed on $\le$ to obtain the coefficient of $\l$ in the continuous limit, and they produce the same value of $r_0$. Therefore, we only need to check that $\delta_\ka \HH$ is also a multiple of $e_2$. The change of variables formula tells us that
\begin{equation}\label{changeofvariable}
\delta_\ka\HH = \left(\frac{\delta \k}{\delta \ka}\right)^\ast \delta_k \HH
\end{equation}
and one can easily see from the proof in lemma \ref{gauge} that
\begin{equation}\label{kka}
\frac{\delta \k}{\delta \ka} = \begin{pmatrix} -1&0&\dots &0\\ \ast & -1&\dots&0\\ \vdots& \ddots& \ddots&\vdots\\ \ast&\dots&\ast&-1\end{pmatrix}
\end{equation}
where the diagonal below the main one has entries which are multiple of the differential operator $D$. Clearly, $\left(\frac{\delta \ka}{\delta \ka}\right)^\ast e_2 = e_2$. 

In conclusion, our integrable system has a geometric realization with a lifting of the form $\l_t = \frac1n(\l'' - \frac2n k_{n-2}\l)$. When $n = m+1$ the theorem follows.
\end{proof}
\subsection{Discretizations of higher order Hamiltonian flows in $\RP^3$ and $\RP^4$}

As we previously said, all Hamiltonian evolutions with Hamiltonian functionals of the form (\ref{H}) induce biHamiltonian and integrable systems in the invariants $\k$. When looking for discretizations of these flows, the first thing to have in mind is that these flows have projective realizations of order higher than $2$. That means their lifts - the evolution of $\Gamma$ - will involve $\Gamma^{(r)}$, $r>2$. From the calculations done in the previous section we learned two things: first, the evolution appearing in the coefficient $B$ will be second order, and so any hope to recover a third order continuous limit will force us to seek maps for which $A = B = 0$, and to look for the continuous limit in the term involving $\e^3$. That is, if we go higher in the degree, we will need to go higher in the power of $\e$. Second, no map defined through the intersection of a segment and a hyperplane will have $B = 0$ since that combination forces $B$ to have $\Gamma''$ terms. Thus, we need to search for candidates among intersections of other combinations of subspaces.

\subsubsection{The $\RP^3$ case}

A simple dimensional counting process shows us that in $\R^4$ three $3$-dimensional hyperplanes through the origin generically intersect in one line, so does one $2$-dimensional subspace and a $3$-dimensional hyperplane. These are the only two cases for which the generic intersection of subspaces is a line. The projectivization of the first case shows us that the intersection of three projective planes in $\RP^3$ is a point, and the second case is the one we considered in the previous section. Thus, we are forced to look for discretizations among maps generated by the intersection of three planes in $\RP^3$. Since there are many possible such choices, we will describe initially a general choice of planes and will narrow down to our discretization, which turned out to be not as natural as those in the planar case. The calculations below will show that the choices of these planes need to be very specific to match the evolution associated to the AGD flows. This is somehow not too surprising: integrable systems are rare and their Hamiltonians are given by very particular choices. One will need to tilt the planes in a very precise way to match those choices.

\vskip 2ex
\centerline{\includegraphics[height=1.5in]{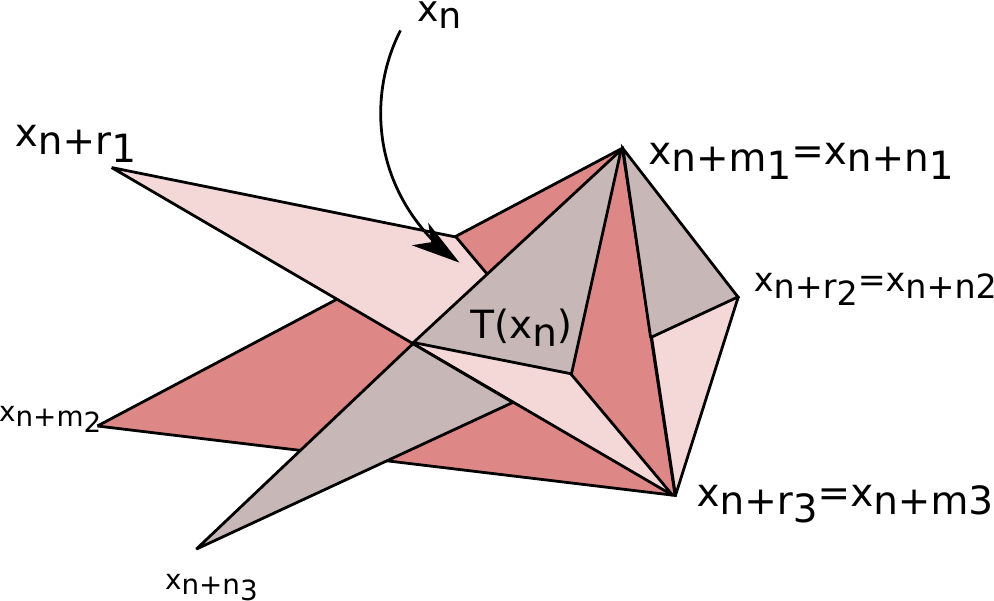}}
\vskip 1ex
\centerline{Fig. 4}
\vskip 2ex

As before, assume $x_{n+k} = \l(x+k\e)$ and consider three projective planes $\p_1$, $\p_2$, $\p_3$ intersecting at one point that we will call $T(x_n)$. Assume our planes go through the following points
\[
\p_1 = \langle x_{n+m_1}, x_{n+m_2},x_{n+m_3}\rangle, \hskip 2ex \p_2 = \langle x_{n+n_1}, x_{n+n_2}, x_{n+n_3}\rangle, 
\]
\[\p_3 = \langle x_{n+r_1}, x_{n+r_2}, x_{n+r_3}\rangle
\]
for some integers $m_i$, $n_i$, $r_i$ that we will need to determine. Figure 4 shows the particular case when $x_{n+m_1} = x_{n+n_1}$, $x_{n+r_2} = x_{n+n_2}$ and $x_{n+m_3} = x_{n+r_3}$.

As before, denote by $\le(x)$ the lifting of $T(x_n)$. Given that $T(x_n)$ is the intersection of the three planes, we obtain the following conditions on $\le$
\begin{align}\label{cond}
\le &=& a_1\l(x+m_1\e) + a_2\l(x+m_2)+a_3\l(x+m_3)\hskip 15ex \\
&=& b_1\l(x+n_1\e) + b_2\l(x+n_2)+b_3\l(x+n_3)\hskip17ex \\
&=& c_1\l(x+r_1\e) + c_2\l(x+r_2)+c_3\l(x+r_3)\hskip 18ex
\end{align}
for functions $a_i, b_i, c_i$ that depend on $\e$. Also as before, assume $\le = \l+\e A+\e^2B+\e^3C+\e^4D+\e^5E+o(\e^5)$ and assume further that
\[
A = \sum_0^3 \al_i\l^{(i)}, \hskip 1ex B = \sum_0^3 \be_i\l^{(i)},\hskip 1ex C = \sum_0^3 \g_i\l^{(i)},\hskip 1ex D = \sum_0^3 \eta_i\l^{(i)},\hskip 1ex E = \sum_0^3 \d_i\l^{(i)}.
\]
\begin{proposition} If $A = B = 0$, then 
\begin{equation}\label{C1}
m_1m_2m_3 = n_1n_2n_3=r_1r_2r_3.
\end{equation}
 In this case $\g_3 = \frac16 m_1m_2m_3$. Under some regularity conditions, (\ref{C1}) implies $A = B = 0$.
\end{proposition}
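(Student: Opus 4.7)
The plan is to Taylor expand both sides of the plane conditions in (\ref{cond}) in powers of $\e$ and match coefficients of $\l^{(j)}\e^\ell$ for small $j,\ell$. Writing $a_i(\e)=\sum_{k\ge 0}a_i^{(k)}\e^k$ and $\l(x+m_i\e)=\sum_{j\ge 0}\tfrac{(m_i\e)^j}{j!}\l^{(j)}$, the RHS of the first equation expands as $\sum_{j,k}\tfrac{\e^{j+k}}{j!}\l^{(j)}\sum_i a_i^{(k)}m_i^j$, and analogously for $\p_2,\p_3$. The LHS is $\l+\e A+\e^2 B+\e^3 C+\cdots$.

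Assume first that $A=B=0$, i.e.\ $\al_j=\be_j=0$ for all $j$. Reading off the coefficients of $\l,\l',\l''$ at orders $\e^0,\e^1,\e^2$ respectively yields the Vandermonde-like system
\[
\sum_i a_i^{(0)}=1,\qquad \sum_i a_i^{(0)}m_i=0,\qquad \sum_i a_i^{(0)}m_i^2=0,
\]
which uniquely determines the $a_i^{(0)}$ when $m_1,m_2,m_3$ are distinct. The coefficient of $\l'''$ at order $\e^3$ then reads $\g_3=\tfrac{1}{6}\sum_i a_i^{(0)}m_i^3$. Using the identity $m_i^3=s_1m_i^2-s_2m_i+s_3$ with $s_3=m_1m_2m_3$ and substituting the three relations above collapses the sum to $\sum_i a_i^{(0)}m_i^3=m_1m_2m_3$, whence $\g_3=\tfrac{1}{6}m_1m_2m_3$. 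The identical argument applied to the $b_i$ and $c_i$ series forces $\g_3=\tfrac{1}{6}n_1n_2n_3=\tfrac{1}{6}r_1r_2r_3$, yielding (\ref{C1}).

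For the converse, assume (\ref{C1}) together with regularity conditions guaranteeing (i) distinctness of the nodes within each of $\{m_i\},\{n_i\},\{r_i\}$, and (ii) transversal intersection of $\p_1,\p_2,\p_3$ for $\e$ small, so that $\p_1\cap\p_2\cap\p_3$ remains a $1$-dimensional line. Solve the three Vandermonde systems to produce the unique $a_i^{(0)},b_i^{(0)},c_i^{(0)}$. By the forward computation together with (\ref{C1}), each of
\[
\sum_i a_i^{(0)}\l(x+m_i\e),\quad \sum_i b_i^{(0)}\l(x+n_i\e),\quad \sum_i c_i^{(0)}\l(x+r_i\e)
\]
lies in the appropriate plane and equals $\l+\tfrac{1}{6}m_1m_2m_3\,\l'''\,\e^3+O(\e^4)$. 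Since the intersection line is $1$-dimensional and $\le$ is pinned by $\le\to\l$ as $\e\to 0$, this common value equals $\le$ modulo $O(\e^4)$, which forces $A=B=0$ and reconfirms $\g_3=\tfrac{1}{6}m_1m_2m_3$.

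The main obstacle is making the converse fully rigorous: one has to identify exactly what the regularity conditions require (distinctness of nodes within each triple together with transversality of the three planes are the natural candidates), and to verify recursively that the higher-order coefficients $a_i^{(k)},b_i^{(k)},c_i^{(k)}$ for $k\ge 1$ can be chosen consistently so that each finite combination tracks $\le$ beyond order $\e^3$. This bookkeeping is routine but nontrivial; modulo it, the entire argument rests on the symmetric-polynomial identity $m_i^3=s_1m_i^2-s_2m_i+s_3$ used above.
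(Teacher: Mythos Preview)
Your forward direction is correct and matches the paper's argument; the paper packages the identity $\sum_i a_i^{(0)}m_i^3=m_1m_2m_3$ as a Vandermonde-determinant lemma (Lemma~\ref{Vandermonde}), but the content is exactly the characteristic-polynomial identity $m_i^3=s_1m_i^2-s_2m_i+s_3$ you invoke.

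The converse is where your outline has a genuine gap, and it is not the bookkeeping issue you flag. Your plan constructs explicit points $V_m=\sum_i a_i^{(0)}\l(x+m_i\e)\in\p_1$ (and analogously $V_n\in\p_2$, $V_r\in\p_3$), observes via (\ref{C1}) that all three equal $\l+\tfrac16 m_1m_2m_3\,\l'''\e^3+O(\e^4)$, and wants to conclude this is $\le$. But $V_m$ is merely \emph{some} point of $\p_1$, not a point of the intersection line $\p_1\cap\p_2\cap\p_3$, so nothing forces $V_m=\le$ to any order. The implicit step ``agrees with a point in each plane to $O(\e^4)$ $\Rightarrow$ agrees with the intersection to $O(\e^4)$'' would require uniform transversality of the three planes, yet as $\e\to0$ all three collapse to $\langle\l,\l',\l''\rangle$, so transversality degenerates at a rate you have not controlled.

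The paper's route for the converse is different and more concrete. Working purely in coefficients, the plane conditions give $\al_2=\al_3=\be_3=0$ automatically, and the $\l'''$-equation becomes the linear system (\ref{M}) in the unknowns $(1,\al_1,\be_2)$ with coefficient rows $(M_i),(N_i),(R_i)$. The regularity condition is then made explicit and algebraic: the $3\times2$ matrix (\ref{check}) with rows $(M_2,M_3),(N_2,N_3),(R_2,R_3)$ (elementary symmetric functions of the node triples) must have rank~$2$. Under this hypothesis $M_1=N_1=R_1$ forces $\al_1=\be_2=0$, and the same system at the next $\e$-order forces $\be_1=\g_2=0$. Finally---and this step is entirely absent from your outline---the normalization (\ref{norm}) on $\le$ is used to kill $\al_0$ and then $\be_0$; without invoking (\ref{norm}) you cannot pass from $\al_1=0$ to $A=0$, since $A=\al_0\l$ is undetermined by the plane equations alone. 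So the correct regularity hypothesis is the rank condition on (\ref{check}), not geometric transversality of the $\p_i$, and the normalization is essential for the converse.
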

\begin{proof} Equating the coefficients of $\l^{(i)}$, $i=0,\dots, 3$, condition (\ref{cond}) implies $\al_2 = \al_3=\be_3 = 0$ and the equations
\begin{align}
\label{one} 1+\al_0\e+\be_0\e^2 &= a_1+a_2 + a_3 +o(\e^3)\hskip 2.5in\\
\label{two} \al_1+\be_1\e+\g_1\e^2  &= a_1 m_1+a_2m_2+a_3m_3 + o(e^2)\\
\label{three} 2(\be_2+\g_2\e+\eta_2\e^2)  &= a_1m_1^2+a_2m_2^2+a_3m_3^2- \frac2{4!}k_2(a_1m_1^4+a_2m_2^4+a_3m_3^4)\e^2 + o(\e^2)
\end{align}
with the additional condition
\begin{equation}\label{condition}
6(\g_3+\eta_3\e+\d_3\e^2)  = a_1m_1^3+a_2m_2^3+a_3m_3^3-\frac1{20} k_2(a_1m_1^5+a_2m_2^5+a_3m_3^5) \e^2 + o(\e^2). 
\end{equation}
The terms with $k_2$ appear when we use the Wilczynski relation $\l^{(4)} = -k_2\l''-k_1\l'-k_0\l$. We obtain similar equations with $b_i, n_i$ and with $c_i, r_i$.

Denote by $\a = (a_i)$ and decompose $\a$ as $\a = \sum \a_i\e^i$, with analogous decompositions for $b_i$ and $c_i$. Then, the  first three equations above allow us to solve for $\a_i$, $\b_i$ and $\c_i$, $i=0,1$, namely
$
\a_i = A(m)^{-1} v_i, \b_i = A(n)^{-1}v_i, \c_i = A(r)^{-1}v_i$ with
\begin{equation}\label{As}
A(s) = \begin{pmatrix}1&1&1\\ s_1&s_2&s_3\\ s_1^2&s_2^2&s_3^2\end{pmatrix}, v_0 = \begin{pmatrix}1\\\al_1\\2\be_2\end{pmatrix}, v_1 = \begin{pmatrix} \al_0\\\be_1\\2\g_2\end{pmatrix}
\end{equation}
We can now use (\ref{condition}) to obtain conditions on the parameters $\al_i$ and $\be_i$. Indeed, these are
\[
6\g_3 = \begin{pmatrix}m_1^3&m_2^3&m_3^3\end{pmatrix} \a_0 = \begin{pmatrix}n_1^3&n_2^3&n_3^3\end{pmatrix} \b_0= \begin{pmatrix}r_1^3&r_2^3&r_3^3\end{pmatrix} \c_0.
\]
After substituting the values for $\a_0, \b_0$ and $\c_0$, these three equations can be alternatively written as the system
\begin{equation}\label{M}
6\g_3 \begin{pmatrix} 1\\1\\1\end{pmatrix}= \begin{pmatrix} M_1&M_2&M_3\\ N_1&N_2&N_3\\ R_1&R_2&R_3\end{pmatrix}\begin{pmatrix}1\\\al_1\\\be_2\end{pmatrix}
\end{equation}
where
\[
M_1 = \frac1{\det(A(m))}\det\begin{pmatrix}m_1^3&m_2^3&m_3^3\\ m_1&m_2&m_3\\ m_1^2&m_2^2&m_3^2\end{pmatrix}, M_2 = \frac1{\det(A(m))}\det\begin{pmatrix}1&1&1\\ m_1^3&m_2^3&m_3^3\\ m_1^2&m_2^2&m_3^2\end{pmatrix},
\]
\[
 M_3 = \frac1{\det(A(m))}\det\begin{pmatrix}1&1&1\\ m_1&m_2&m_3\\ m_1^3&m_2^3&m_3^3\end{pmatrix}
\]
with similar definitions for $N_i$ (using $n_i$ instead of $m_i$) and $R_i$ (using $r_i$). 
\vskip 1ex

Although the following relation must be known, I was unable to find a reference. 
\begin{lemma}\label{Vandermonde} Assume $A(m)$ is an $s\times s$ Vandermonde matrix with constants $m_1, \dots, m_s$. Let $M_i = \frac{\det A_i(m)}{\det A(m)}$, where $A_i(m)$ is obtained from $A(m)$ when substituting the ith row with $m_1^s, \dots, m_s^s$. Then $M_i = (-1)^s p_{s-i+1}$, where $p_r$ are the elementary symmetric polynomials defined by the relation
\[
(x-m_1)\dots (x-m_s) = x^s + p_{s-1} x^{s-1}+\dots + p_1 x+p_0.
\]
\end{lemma}
\begin{proof}[Proof of the lemma] One can see that
\[
(x-m_1)\dots (x-m_s) = (-1)^s \det A(m)^{-1} \det\begin{pmatrix} 1&1&\dots&1\\ x&m_1&\dots&m_s\\ x^2&m_1^2&\dots& m_s^2\\ \vdots&\vdots&\dots&\vdots\\ x^s&m_1^s&\dots&m_s^s\end{pmatrix}
\]
since both polynomials have the same roots and the same leading coefficient. The lemma follows from this relation.
\end{proof}
{\it We continue the proof of the Proposition}. 
From the lemma we know the values of $M_i$ to be 
\begin{equation}\label{Mi}
M_1 = m_1m_2m_3,\hskip 2ex M_2 = -(m_1m_2+m_1m_3+m_2m_3),\hskip 2ex M_3 = m_1+m_2+m_3.
\end{equation}
 It is now clear that, if $A = B = 0$, then $\al_1 = \be_2 = 0$ and the system implies $M_1 = N_1=R_1$ as stated in the proposition.

Notice that the condition $M_1=N_1=R_1$ does not guarantee $A = B = 0$; this will depend on the rank of the matrix in (\ref{M}). 
Notice also that the same condition applies to $\a_1$, $\b_1$ and $\c_1$, we only need to substitute $F_0$ by $F_1$ in the calculations. Therefore, if the rank of 
\begin{equation}\label{check}
\begin{pmatrix} M_2&M_3\\ N_2&N_3\\ R_2&R_3\end{pmatrix}
\end{equation}
is maximal, then $M_1 = N_1= R_1$ if, and only  if $\al_1 = \be_2 = 0$, and also $\be_1 = \g_2 = 0$. Assume that the rank of this matrix is 2. Then, $A = \al_0\l$ and  condition (\ref{norm}) applied to $\le$ as before becomes
\[
0 = \e 4\al_0+o(\e)
\]
implying $\al_0 = 0$ and $A = 0$. Likewise, if the rank is two then $\be_1=\be_2 = 0$ and $B = \be_0\g$. Applying (\ref{norm}) again we will obtain $\be_0=0$ and $B=0$.
\end{proof}

If we now go back to (\ref{one})-(\ref{two})-(\ref{three})-(\ref{condition}) and we compare the powers of $\e^2$, we can solve for $\a_2, \b_2, \c_2$ as $\a_2 = A(m)^{-1}v^a_2$, $\b_2 = A(n)^{-1}v^b_2$, $\c_2 = A(r)^{-1}v^c_2$ with
\[
v^a_2 = \begin{pmatrix} \be_0\\ \g_1\\ 2\eta_2+\frac 1{12}k_2 \m_4\cdot\a_0\end{pmatrix}, v^b_2 = \begin{pmatrix} \be_0\\ \g_1\\ 2\eta_2+\frac 1{12}k_2 \nb_4\cdot\b_0\end{pmatrix}
, v^c_2 = \begin{pmatrix} \be_0\\ \g_1\\ 2\eta_2+\frac 1{12}k_2 \r_4\cdot\c_0\end{pmatrix}
\]
where $\m_4 = (m_1^4, m_2^4, m_3^4)$. From now on we will denote $\m_i = (m_1^i, m_2^i, m_3^i)$ and we will have analogous notation for $\nb_i$ and $\r_i$. Using these formulas in the extra condition (\ref{condition}), we obtain a system of three equations.
As before, we can rearrange these equations to look like the system 
\begin{equation}\label{Meq}
\M\begin{pmatrix} \gamma_1\\ 2\eta_2\\ -3!\delta_3\end{pmatrix}= \begin{pmatrix} \displaystyle \frac1{20}\m_5A^{-1}(m)e_1-\frac{M_3}{12}\m_4A^{-1}(m)e_1\\\\\displaystyle  \frac1{20}\nb_5A^{-1}(n)e_1-\frac{N_3}{12}\nb_4A^{-1}(n)e_1\\\\\displaystyle
 \frac1{20}\r_5A^{-1}(r)e_1-\frac{R_3}{12}\r_4A^{-1}(r)e_1\end{pmatrix} k_2
 \end{equation}
 where
 \[
 \M = \begin{pmatrix} M_2&M_3&1\\ N_2&N_3&1\\ R_2&R_3&1\end{pmatrix}
 \]
 
Since $\gamma_3 =\frac{m_1m_2m_3}{6}$, at first look it seems as if the numbers $(m_1,m_2,m_3) = (-1,-2,3)$, $(n_1,n_2,n_3)=(-1,2,-3)$ and $(r_1,r_2,r_3)=(1,-2,-3)$ would be good choices for a generalization of the pentagram map (it would actually be a direct generalization of the map in proposition \ref{syst2}). As we see next, the situation is more complicated.  

\begin{lemma}\label{square} Assume $m_1^2+m_2^2+m_3^2=n_1^2+n_2^2+n_3^2 = r_1^2+r_2^2+r_3^2$. Then the continuous limit of $\le$ is not the AGD flow and it is not biHamiltonian.
\end{lemma}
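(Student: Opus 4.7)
The plan is to compute $\gamma_1$ explicitly from the matrix equation (\ref{Meq}) under the hypothesis $p_2(m)=p_2(n)=p_2(r)=P$ (equivalently, $M_2=(P-M_3^2)/2$ and analogously for $N_2,R_2$), and then to compare the scale-invariant ratio $\gamma_1/\gamma_3$ with the corresponding ratio for the projective realization of the AGD flow. The preceding Proposition already gives $\gamma_3=M_1/6$, and since Lemma \ref{square} is meaningful only when $A=B=0$, I will use throughout the conclusion $L:=M_1=N_1=R_1$.

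Setting $v=A(m)^{-1}e_1$, the defining relations $\mathbf{m}_0 v=1$, $\mathbf{m}_1 v=\mathbf{m}_2 v=0$ together with the Cayley--Hamilton identity $m_i^3=M_3 m_i^2+M_2 m_i+M_1$ give, by iteration, $\mathbf{m}_3 v=M_1$, $\mathbf{m}_4 v=M_3 M_1$, and $\mathbf{m}_5 v=M_1(M_3^2+M_2)$. Plugging these into the first entry of the right-hand side of (\ref{Meq}) yields $\frac{k_2 M_1}{60}(3M_2-2M_3^2)$, which the hypothesis simplifies to $\frac{k_2 L}{120}(3P-7M_3^2)$; the other two entries are obtained by replacing $M_3$ by $N_3$ or $R_3$. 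Solving for $\gamma_1$ by Cramer's rule, the column operation $C_1\mapsto C_1-3P\,C_3$ turns the $3\times 3$ numerator determinant into $-7$ times the Vandermonde determinant in $(M_3,N_3,R_3)$, while the analogous operation $C_1\mapsto C_1-(P/2)\,C_3$ reduces $\det\M$ to one-half of the same Vandermonde (with opposite sign). The Vandermondes cancel cleanly, leaving $\gamma_1=\frac{7}{60}k_2 L$; together with $\gamma_3=L/6$ this gives the ratio $\gamma_1/\gamma_3=\frac{7}{10}k_2$.

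The analogous ratio for the AGD flow is read off from (\ref{liftev}). By Proposition~2.1, $\HH_3=\frac{3}{4}\int(k_0-\tfrac{1}{8}k_2^2)\,dx$, so $\delta_k\HH_3=(\tfrac{3}{4},0,-\tfrac{3}{16}k_2)^T$; applying $(\delta k/\delta\kappa)^*$ from (\ref{kka}) gives $\delta_\kappa\HH_3=(-\tfrac{3}{4},0,\tfrac{3}{16}k_2)^T$, and multiplying by $g$ from (\ref{g3}) produces a lifted AGD evolution whose $\l'''$-coefficient is $\tfrac{3}{16}k_2$ and whose $\l'$-coefficient is $-\tfrac{3}{4}+\tfrac{3}{16}k_2^2$. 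Since any time reparametrization $t\mapsto f(\kappa)\,s$ rescales $\gamma_1$ and $\gamma_3$ by the same factor, the ratio $\gamma_1/\gamma_3$ is an invariant of the flow; equality with the AGD flow would therefore force $\frac{7}{10}k_2=(k_2^2-4)/k_2$, that is $3k_2^2=40$, which fails generically. The biHamiltonian statement then follows because the AGD hierarchy is generated by Lenard recursion from the two AGD brackets, so a third-order projective evolution not coinciding (up to rescaling) with the $k=3$ AGD flow cannot carry a compatible second Hamiltonian structure of the required form. The main obstacle will be the order-$\epsilon^2$ bookkeeping in (\ref{three}) and (\ref{condition}), where the Wilczynski substitution $\l^{(4)}=-k_2\l''-k_1\l'-k_0\l$ must be tracked carefully and all contributions from $\alpha_1,\beta_1,\beta_2,\gamma_2$ consistently set to zero using the conclusion of the Proposition.
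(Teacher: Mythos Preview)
Your computation of $\gamma_1=\tfrac{7}{60}k_2M_1$ and hence of the ratio $\gamma_1/\gamma_3=\tfrac{7}{10}k_2$ is correct and agrees with the paper's; your use of Cayley--Hamilton to evaluate $\mathbf m_4A(m)^{-1}e_1$ and $\mathbf m_5A(m)^{-1}e_1$, together with the column operations in Cramer's rule, is essentially the same route the paper takes.

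The error is in your identification of the AGD side. In the conventions of (\ref{K}) and (\ref{liftev}) the vector $\delta_\kappa\HH$ is ordered as $(\delta_{\kappa_{2}}\HH,\delta_{\kappa_{1}}\HH,\delta_{\kappa_{0}}\HH)^T$, so that the \emph{last} entry multiplies the column $\l'''+k_2\l'+(k_1-k_2')\l$ of $\hat\rho g$. You have reversed this ordering: your $(-\tfrac34,0,\tfrac{3}{16}k_2)^T$ places $\delta_{\kappa_2}\HH$ in the $\l'''$ slot, which would make the $\l'''$--coefficient of the AGD realization proportional to $k_2$ (and in particular vanish when $k_2=0$). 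With the correct ordering one gets, up to an overall constant,
\[
\l_t=-\l'''-\tfrac34 k_2\,\l'-r_0\l,
\]
so the AGD ratio is $\tfrac34 k_2$, not $(k_2^2-4)/k_2$. The comparison you should be making is therefore $\tfrac{7}{10}k_2$ versus $\tfrac34 k_2$; these differ for all $k_2\ne 0$, and the conclusion of the lemma follows. Your invariance argument via ``$t\mapsto f(\kappa)s$'' is also more than you need and somewhat misleading: only \emph{constant} rescalings of $t$ are allowed if one wants the induced PDE on invariants to be unchanged, and that already makes $\gamma_1/\gamma_3$ the right quantity to compare. Finally, your last sentence on biHamiltonianity is no more (and no less) of a proof than the paper's own remark that the verification is ``long and tedious''; neither constitutes an argument, so you should either carry out the check or flag it as an assertion.
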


\begin{proof} We already have $A = B = 0$ and $\g_2 = 0$. 
Equation (\ref{Meq}) allows us to solve for $\gamma_1$. After that, $\gamma_0$ will be determined, as usual, by the normalization equation (\ref{norm}). Direct calculations determine $\m_5A^{-1}(m) e_1 = M_1M_5$ and $\m_4A^{-1}(m)e_1 = M_1M_3$, where $M_1, M_2, M_3$ are as in (\ref{Mi}) and where
\begin{equation}\label{M5}
M_5 = m_1^2+m_2^2+m_3^2+m_1m_2+m_1m_3+m_2m_3 = M_3^2 + M_2.
\end{equation}
From here we get 
\[
\frac1{20}M_5-\frac1{12}M_3^2 = -\frac1{30}(M_5+M_2) + \frac7{60} M_2.
\]
$M_5+M_2 = m_1^2+m_2^2+m_3^2$, and so $M_5+M_2= N_5+N_2=R_5+R_2$ by the hypothesis of the lemma. We have
\[\gamma_1 = M_1\frac{k_2}{\det\M}\left(-\frac1{30}\det\begin{pmatrix}M_5+M_2&M_3&1\\ N_5+N_2&N_3&1\\R_5+R_2&R_3&1\end{pmatrix} +\frac 7{60}\det\begin{pmatrix}M_2&M_3&1\\N_2&N_3&1\\R_2&R_3&1\end{pmatrix} \right) 
\]
\[= \frac{7M_1}{60}k_2.
\]
Since $\gamma_3 = \frac{M_1}{6}$ we finally have that $\le$ is expanded as
\[
\le = \l + \frac{M_1}6(\l'''+\frac7{10}k_2\l'+r_0\l)\e^3+ o(\e^3)
\]
where $r_0$ is uniquely determined by (\ref{norm}). 

Let's now find the lifting for the projective realization of the AGD flow. From lemma \ref{gauge} the moving frame associated to $K$ is given by $\rho=\hat\rho g$ where $g$ is the gauge matrix in the case of $\SL(4)$, given as in (\ref{g4}). We also know that the lifted action of $\SL(4)$ on $\R^4$ is linear. With this information one can conclude that the lift of the evolution (\ref{uevpr}) to $\R^4$ is of the form
\[
\l_t = \rho \begin{pmatrix}r_0\\ \delta_\ka\HH\end{pmatrix} =  \begin{pmatrix}\l & \l'& \l''+k_2\l & \l'''+k_2\l'+(k_1-k_2')\l\end{pmatrix} \begin{pmatrix}r_0\\ \delta_\ka\HH\end{pmatrix}.
\]
According to (\ref{changeofvariable}), we also know that 
\[
\delta_\ka\HH = \frac{\delta \k}{\delta \ka}^\ast\delta_k\HH = \frac{\delta \k}{\delta \ka}^\ast \begin{pmatrix} -\frac14 k_2\\ 0\\ 1\end{pmatrix} = \begin{pmatrix} \frac14 k_2\\ 0\\ -1\end{pmatrix}.
\]
 Therefore, the lifting of the projective realization of the AGD flow is  $\l_t = -\l''' -\frac34 k_2\l'-r_0\l$, with $r_0$ determined by (\ref{norm}). Although a change in the coefficient of $\l'$ seems like a minor difference, the change is in the Hamiltonian of the evolution and any small change usually results in a system that is no longer biHamiltonian, as it is the case here. The calculations that show that the resulting Hamiltonian is no longer biHamiltonian are very long and tedious, and we are not including them here. 
\end{proof}
 
According to this lemma, in order to find a discretization of the AGD flow we need to look for planes for which the hypothesis of the previous lemma does not hold true.
\begin{theorem} Assume $(m_1,m_2,m_3) = (-c,a,b)$, $(n_1,n_2,n_3)=(c,-a,b)$ and $(r_1,r_2,r_3)=(c,-1,ab)$. Then, the map T is a discretization of the AGD flow whenever 
\begin{equation}\label{level}
\frac{c-1+a(b-1)}{b-c} = -\frac{5}4.
\end{equation} 
\end{theorem}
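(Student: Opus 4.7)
The plan is to apply the preceding proposition, together with the identification of the lifted AGD flow obtained in the proof of Lemma \ref{square}, to the specific symmetric pattern of indices in the statement. The first step is a direct check of condition (\ref{C1}): one has $m_1m_2m_3 = n_1n_2n_3 = r_1r_2r_3 = -abc$, so (assuming the generic rank-two hypothesis on the matrix (\ref{check})) the preceding proposition yields $A=B=0$, $\g_2 = 0$, and $\g_3 = -abc/6 = M_1/6$. Since the lifted $\RP^3$ projective AGD flow was shown in the proof of Lemma \ref{square} to be $\l_t = -\l''' - \tfrac34 k_2\l' - r_0\l$, the map $T$ discretizes this flow exactly when $\g_1 = \tfrac34 k_2\,\g_3 = \tfrac18 M_1 k_2$.

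The second step is to compute $\g_1$ from (\ref{Meq}) by Cramer's rule. Applying Lemma \ref{Vandermonde} to the inner products $\m_5 A(m)^{-1} e_1 = M_1 M_5$ and $\m_4 A(m)^{-1} e_1 = M_1 M_3$, and using the identity $M_5 = M_3^2 + M_2$ from (\ref{M5}) (and analogues for $n$ and $r$), the right-hand side of (\ref{Meq}) can be rewritten in terms of $M_2, M_3$ only. Substituting $M_2 = \tfrac12(S_m - M_3^2)$ with $S_m = m_1^2+m_2^2+m_3^2$ then splits both the numerator and the denominator of the Cramer quotient into two determinants, and gives
\[
\frac{\g_1}{M_1 k_2} = \frac{3D_1 - 7D_2}{60(D_1 - D_2)},\qquad D_1 = \det\begin{pmatrix} S_m & M_3 & 1\\ S_n & N_3 & 1\\ S_r & R_3 & 1\end{pmatrix},\quad D_2 = \det\begin{pmatrix} M_3^2 & M_3 & 1\\ N_3^2 & N_3 & 1\\ R_3^2 & R_3 & 1\end{pmatrix}.
\]
As a sanity check, in the setting of Lemma \ref{square} one has $S_m = S_n = S_r$, so $D_1 = 0$ and the ratio reduces to $7/60$, recovering the obstruction found there. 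The AGD matching condition $\g_1 = \tfrac18 M_1 k_2$ is therefore equivalent to $D_2 = 9 D_1$.

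The final step is to evaluate $D_1$ and $D_2$ for the choices in the statement. Crucially, $S_m = S_n = a^2+b^2+c^2$, so row-subtraction isolates a single factor of $2(a-c)$ and yields $D_1 = 2(a-c)(a^2-1)(b^2-1)$. The determinant $D_2$ is of Vandermonde type and equals $-(N_3-M_3)(R_3-M_3)(R_3-N_3)$; substituting $M_3 = a+b-c$, $N_3 = -a+b+c$, $R_3 = ab+c-1$ gives the clean factorizations $N_3-M_3 = 2(c-a)$, $R_3-N_3 = (a-1)(b+1)$, and $R_3-M_3 = 2(c-1)+(a-1)(b-1)$. Inserting these into $D_2 = 9D_1$ and cancelling the common factor $2(a-c)(a-1)(b+1)$ reduces the equation to $2(c-1)+(a-1)(b-1) = 9(a+1)(b-1)$, equivalently $c-1+a(b-1) = 5(a+1)(b-1) = -\tfrac54(b-c)$, which is precisely (\ref{level}).

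The main obstacle is the middle step: without first using $M_5 = M_3^2 + M_2$ and the coincidence $S_m = S_n$ forced by the partially symmetric choice of indices, the Cramer's-rule expression for $\g_1$ is an unilluminating rational function of nine parameters. Once these identities are exploited, the matching condition collapses to the single polynomial identity $D_2 = 9D_1$, after which the factorizations of $D_1$ and $D_2$ make the verification of (\ref{level}) routine.
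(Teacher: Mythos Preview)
Your proof is correct and follows essentially the same route as the paper: verify $m_1m_2m_3=n_1n_2n_3=r_1r_2r_3$ to obtain $A=B=0$, apply Cramer's rule to (\ref{Meq}) after rewriting the right-hand side via $M_5=M_3^2+M_2$, and then match $\gamma_1$ with the AGD coefficient $\tfrac34 k_2\,\gamma_3$. The only cosmetic difference is the column decomposition you choose in the Cramer step: the paper splits the first column using the basis $(M_5,M_2)$, which is convenient because the $M_2$-determinant equals $\det\M$ and contributes the constant $\tfrac1{12}$ immediately, leaving only the ratio of the $M_5$-determinant to $\det\M$ to compute; you instead split using $(S_m,M_3^2)$, which buys you the simplification $S_m=S_n$ in $D_1$ and the Vandermonde structure of $D_2$. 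The two decompositions are related by $M_5=\tfrac12(S_m+M_3^2)$, $M_2=\tfrac12(S_m-M_3^2)$, and the resulting conditions $D_2=9D_1$ and $\tfrac{c-1+a(b-1)}{b-c}=-\tfrac54$ are algebraically identical.
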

\begin{proof} With the choices above one can check that
\[
\det\begin{pmatrix} M_2&M_3&1\\ N_2&N_3&1\\ R_2&R_3&1\end{pmatrix} = -2(c-a)(a-1)(b+1)(b-c)
\]
while
\[
\det\begin{pmatrix} M_5&M_3&1\\ N_5&N_3&1\\ R_5&R_3&1\end{pmatrix} = -2(c-a)(a-1)(b+1)(c-1+a(b-1)).
\]
With this particular ansatz, the rank of (\ref{check}) is maximal whenever $c\ne a$, $a\ne 1$, $b\ne -1$ and $b\ne c$.
Using the fact that 
\[
\frac1{20}M_5-\frac1{12}M_3^2 = -\frac1{30}M_5 + \frac1{12} M_2
\]
we obtain
\[
\gamma_1 = -\frac1{30}M_1\frac{c-1+a(b-1)}{b-c} + \frac1{12}M_1.
\]
As we saw before, the lifting of the realization of the AGD flow is given by $\l_t= \l'''+\frac34k_2\l'+r_0\l$ (after a change in the sign of $t$), where $r_0$ is determined by (\ref{norm}). Since $\gamma_3 = \frac{M_1}{6}$, to match this flow we need $\gamma_1 =\frac{1}{8}k_2M_1$.
That is, we need
\[
\frac{c-1+a(b-1)}{b-c} = -\frac{5}4
\]
as stated in the theorem.
\end{proof}

Equation (\ref{level}) can be rewritten as
\[
z + xy = 1
\]
where $z = c$, $x={4a+5}$, $y = 1-b$. In principle there are many choices like $c = -2, a = -2, b=2$ that solve these equations, but these are not valid choices since the planes associated to $(-c,a,b)$, $(c,a,-b)$ are not well defined (they are determined by only two points). Thus, looking for appropriate values is simple, but we have to be careful. In particular, we cannot choose any vanishing value, since $m_1m_2m_3=n_1n_2n_3=r_1r_2r_3$ would imply that all three planes intersect at $x_n$ (the zero value for $m_i$, $n_i$ and $r_i$) and $T$ would be the identity. We also do not want to have the condition in lemma \ref{square} (hence $a$ and $b$ cannot be $\pm1$), plus we want to have the matrix (\ref{check}) to have full rank, which implies $c\ne a$, $a\ne 1$, $b\ne -1$ and $b\ne c$. A possible choice of lowest order is
\[\begin{array}{cc}
a = -2, b=3, c=-5 
\end{array}
\]
and other combinations involving higher values.  Choosing this simplest value, we see that the planes are $\Pi_1 = \langle x_{n-2}, x_{n+3}, x_{n+5}\rangle$, $\Pi_2 = \langle x_{n-5}, x_{n+2}, x_{n+3}\rangle$ and $\Pi_3=\langle x_{n-5}, x_{n+1}, x_{n-6}\rangle$, which shows just how precise one needs to be when choosing them. Of course, these are not necessarily the simplest choices, just the ones given by our ansatz. Using a simple C-program and maple one can show that $6$ is the lowest integer value that needs to be included, so our choice is in fact minimal in that sense (there is no solution if we only use $-5,-4 \dots, 4, 5$ for $m_i$, $n_i$ and $r_i$). One can check that, for example, $\Pi_1 = \langle x_{n+2}, x_{n-3}, x_{n+5}\rangle$, $\Pi_2 = \langle x_{n+5}, x_{n-2}, x_{n+3}\rangle$,  $\Pi_3=\langle x_{n+5}, x_{n+1}, x_{n+6}\rangle$ and  $\Pi_1 = \langle x_{n+1}, x_{n-3}, x_{n-4}\rangle$, $\Pi_2 = \langle x_{n-1}, x_{n-3}, x_{n+4}\rangle$ and $\Pi_3=\langle x_{n+1}, x_{n+2}, x_{n+6}\rangle$ are also choices.

It will be valuable and very interesting to learn the geometric significance (if any) of this condition and whether or not the map $T$, when written in terms of the projective invariants of elements of $\P_n$, is also completely integrable as it is the case with the pentagram map. Learning about their possible discrete structure might aid the understanding of condition (\ref{level}) and would greatly aid the understanding of the general case. Doing this study is non-trivial as even the projective invariants of twisted polygons in three dimensions are not known.

\subsubsection{The $\RP^4$ case} One thing we learned form the $\RP^3$ case is that choosing an appropriate set of linear subspaces intersecting to match discretizations of biHamiltonian flows involve solving Dyophantine  problems. This will also become clear next. These  Dyophantines problems grow increasingly complicated very fast, but, nevertheless they seem to be solvable. Our calculations are telling: a general Dyophantine problem of high order is unlikely to have solutions. Ours can be simplified using the symmetry in the values $m_i$, $n_i$ and $r_i$ to reduce the order as we write them in terms of elementary symmetric polynomials. Still, as we see next, in cases when, based on degree and number of variables, one should expect a solution, we have none. In other apparently similar cases we have an infinite number. Hence, the fact that our cases have solutions hints to a probable underlying reason to why they do.
In this section we find a discretization for a the second integrable AGD flow in $\RP^4$, and we will draw from it a conjecture for the general case.

\begin{proposition} The projective geometric realization of the AGD Hamiltonian system associated to the Hamiltonian
\[
\HH(L) = \int_{S^1} \mathrm{res}(L^{3/5}) dx
\]
has a lift given by
\begin{equation}\label{ev4}
\l_t = \l'''+\frac{27}5k_3\l'+r_0\l
\end{equation}
where again $r_0$ is determined by the property (\ref{norm}) of the flow.
\end{proposition}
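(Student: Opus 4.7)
The plan is to mimic the derivation for the $\RP^3$ AGD Hamiltonian carried out in Lemma~\ref{square}: first compute $\HH(L) = \int_{S^1}\mathrm{res}(L^{3/5})\,dx$ explicitly in terms of the Wilczynski coefficients $k_i$ modulo exact derivatives; then take $\delta_k\HH$, push it through the change-of-variables formula (\ref{changeofvariable}); and assemble the lift via (\ref{liftev}) using the gauge matrix $g$ in (\ref{g4}).

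The first substantive step is pseudodifferential. Write $P = L^{1/5} = D + \ell_1 D^{-1} + \ell_2 D^{-2} + \ell_3 D^{-3} + \dots$, and extend by one order the expansion recorded in the proof of Theorem~\ref{basictheorem} to obtain
\[
P^k = D^k + k\ell_1 D^{k-2} + \Bigl(k\ell_2 + \binom{k}{2}\ell_1'\Bigr) D^{k-3} + \Bigl(k\ell_3 + \binom{k}{2}\ell_2' + \binom{k}{3}\ell_1'' + \binom{k}{2}\ell_1^2\Bigr) D^{k-4} + o(D^{k-4}).
\]
Setting $k=5$ and matching with $L = D^5+k_3 D^3 + k_2 D^2 + k_1 D + k_0$ at degrees $D^3, D^2, D$ gives $\ell_1 = k_3/5$, $\ell_2 = (k_2 - 2k_3')/5$, and $\ell_3$ as an explicit polynomial in $k_i, k_i', k_i''$. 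Reusing the same formula with $k=3$ yields $\mathrm{res}(L^{3/5}) = 3\ell_3 + 3\ell_2' + \ell_1'' + 3\ell_1^2$; substitution and integration over $S^1$ (discarding exact derivatives) then collapses $\HH$ to a linear combination of $\int k_1\,dx$ and $\int k_3^2\,dx$.

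I would next differentiate $\HH$ componentwise to produce $\delta_k\HH$, ordered $(\delta_{k_3},\delta_{k_2},\delta_{k_1},\delta_{k_0})$ consistently with (\ref{kka}), and apply $(\delta\k/\delta\ka)^*$ to reach $\delta_\ka\HH$. The matrix $\delta\k/\delta\ka$ is read directly off the $\RP^4$ expressions for $k_i$ in terms of $\ka_i$ recalled before (\ref{g4}): it is lower-triangular with $-1$'s on the diagonal, and its adjoint is computed using the scalar rules $D^\ast=-D$ and $(fD)^\ast=-fD-f'$.

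Finally, inserting $\delta_\ka\HH$ into (\ref{liftev}) with the columns
\[
\hat\rho g = \bigl(\l,\ \l',\ \l''+k_3\l,\ \l''' + k_3\l' + (k_2-2k_3')\l,\ \l^{(4)}+k_3\l''+(k_2-k_3')\l'+(k_1-k_2'-k_3'')\l\bigr)
\]
eliminates the $\l''$ and $\l^{(4)}$ contributions (the relevant components of $\delta_\ka\HH$ vanish), leaves an evolution of the form $\alpha\l''' + \beta k_3\l' + r_0\l$, and after the sign flip in $t$ used in Lemma~\ref{square} together with the normalization (\ref{norm}) used to fix $r_0$, yields the stated lift $\l_t = \l''' + \frac{27}{5}k_3\l' + r_0\l$. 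The main obstacle is purely bookkeeping: extending the formula for $P^k$ one degree beyond what the proof of Theorem~\ref{basictheorem} needed, and in particular getting right the combinatorial coefficient $\binom{5}{3}=10$ in front of $\ell_1''$. Once $\ell_3$ is correctly in hand, the remainder of the argument parallels Lemma~\ref{square}, with $k_3$ in the role of $k_2$ and the $\RP^4$ gauge matrix replacing the $\RP^3$ one.
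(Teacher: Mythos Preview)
Your proposal is correct and follows essentially the same route as the paper: compute $\ell_1,\ell_2,\ell_3$ from $(L^{1/5})^5=L$, read off $\mathrm{res}(L^{3/5})=3\ell_3+3\ell_2'+\ell_1''+3\ell_1^2$, integrate to get $\HH$ as a combination of $\int k_1$ and $\int k_3^2$, push $\delta_k\HH$ through $(\delta\k/\delta\ka)^\ast$, and assemble the lift via $\hat\rho g$ with a final sign flip in $t$. One minor correction: the $P^k$ expansion you are extending lives in the proof of the theorem immediately \emph{after} Theorem~\ref{basictheorem}, not in Theorem~\ref{basictheorem} itself.
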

\begin{proof} 
As before, we need to find $\delta_k \HH$, change the variable to $\delta_{\ka}\HH$ (to relate it to the coefficients of the realizing flow), and write these coefficients in terms of the Wilczynski invariants $k_i$.

If $L^{1/5} = D+\ell_1D^{-1}+\ell_2D^{-2}+\ell_3D^{-3}+o(D^{-3})$, then
\[
L^{3/5} = D^3+3\ell_1D+3(\ell_1'+\ell_2)+(\ell_1''+3\ell_2'+3\ell_1^2+3\ell_3)D^{-1} + o(D^{-1})
\]
so that
\[
\HH(L) = \int_{S^1}\mathrm{res}(L^{3/5})dx = 3\int_{S^1} (\ell_1^2+\ell_3)dx.
\]
Using $(L^{1/5})^5 = L$ we find directly that
\[
\ell_1 = \frac15 k_3, \hskip 2ex\ell_3 = \frac15(k_1-2k_2'+2k_3''+2k_3^2).
\]
Therefore
\[
\HH(L) = \frac35\int_{S^1} \frac{11}5k_3^2+k_1 dx
\]
and $\delta_k\HH = \frac35(e_3 + \frac {22}5 k_3 e_1)$.

Using expression (\ref{kka}) and Lemma \ref{gauge} we get

\[
\frac{\delta k}{\delta\ka} = \begin{pmatrix} -1&0&0&0\\ -3D&-1&0&0\\ -3D^2&-2D&-1&0\\ -D^3-\ka_3'-\ka_3D&-D^2&-D&-1\end{pmatrix}.
\]
\[
\delta_\ka\HH= \left(\frac{\delta\k}{\delta \ka}\right)^\ast\delta\HH = \begin{pmatrix}-\frac {22}5 k_3\\0\\-1\\0\end{pmatrix}.
\]
Finally, the matrix $g$ appearing in Lemma \ref{gauge} is in this case given by (\ref{g4}) and the lifting of the projective realization associated to the $\HH$ Hamiltonian evolution is given by
\[
\l_t = (\l, \l', \l'', \l''', \l^{(4)}) g \begin{pmatrix}\hat r_0\\\delta_\ka\HH\end{pmatrix} = -\l'''-\frac{27}5k_3\l'-r_0\l.
\]
A simple change of sign in $t$ will prove the theorem.

\end{proof}

There are 4 possible combinations of linear submanifolds in $\RP^4$ intersecting at a point: four $3$-dim subspaces, two $2$-dim planes,  one $2$-dim plane and two $3$-dim subspaces and one line and one $3$-dim subspace. These correspond to the following subspaces through the origin in $\R^5$ that generically intersect in a line
\begin{enumerate}
\item\label{1} Four 4-dimensional subspaces.
\item\label{2} Two 3-dimensional subspaces.
\item\label{3} One 3-dimensional subspace and two 4-dimensional ones.
\item\label{4} One 2-dimensional subspace and one 4-dimensional one.
\end{enumerate}
Case (\ref{4}) corresponds to the intersection of a projective line and hyperplane, the case we studied first. Case (\ref{1}) is a natural choice for the fourth order AGD flow, and one can easily check that it cannot have a third order limit but a fourth order one. Therefore, we have choices (\ref{2}) and (\ref{3}) left.

\begin{theorem} \label{case2} If a nondegenerate map $T:\P^n\to \P^n$ is defined using combination (\ref{2}) in $\RP^4$ and its lifting has a continuous limit of the form $\l_t = a\l'''+ b\l'+c\l$, then $b = \frac3{10}a$ and $c$ is determined by (\ref{norm}).
\end{theorem}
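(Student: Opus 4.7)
The plan mirrors the proof of Lemma \ref{square} in $\RP^3$: expand $\le$ in $\e$, parametrize the two plane conditions, match coefficients of $\l^{(r)}$ at each order, and read off the ratio $\g_1/\g_3$ from the lowest order at which Wilczynski's relation enters. Write $\le = \l + \e A + \e^2 B + \e^3 C + \e^4 D + \e^5 E + o(\e^5)$, where each of $A,B,C,D,E$ is a combination of $\l,\l',\l'',\l''',\l^{(4)}$. The hypothesis that the limit has the form $a\l''' + b\l' + c\l$ is equivalent to $A = B = 0$ and $\g_2 = \g_4 = 0$, so $a = \g_3$, $b = \g_1$, $c = \g_0$. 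Parametrize the two planes as
\[
\le = \sum_{i=1}^3 a_i(\e)\,\l(x + m_i\e) = \sum_{j=1}^3 b_j(\e)\,\l(x + n_j\e),
\]
with $a_i(\e) = \sum_k a_i^{(k)}\e^k$ and likewise for $b_j$, and match coefficients of $\e^s\l^{(p)}$.

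At orders $s \le 4$ the equations for $p \le 2$ force $\a^{(0)}$ to be the Vandermonde inverse of $(1,0,0)^T$ and $\a^{(1)}=0$; Lagrange interpolation modulo $(x-m_1)(x-m_2)(x-m_3)$ gives $\sum a_i^{(0)}m_i^3 = m_1m_2m_3$, so the $(s,p)=(3,3)$ matching yields the first consistency condition $\g_3 = \frac{m_1m_2m_3}{6}=\frac{n_1n_2n_3}{6}$. Similarly $\sum a_i^{(0)}m_i^4 = m_1m_2m_3(m_1+m_2+m_3)$, and the $(s,p)=(4,4)$ matching forces the second consistency $m_1+m_2+m_3 = n_1+n_2+n_3$. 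The equations at $(s,p) = (2,0),(3,1),(4,2)$ then pin down $\a^{(2)}$ as the Vandermonde inverse of $(0,\g_1,2\eta_2)^T$, where $\eta_2$ is the auxiliary $\l''$-coefficient of $D$; the analogous conclusions hold for $\b^{(0)},\b^{(1)},\b^{(2)}$ in terms of the $n_j$.

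The ratio $\g_1/\g_3$ is fixed at the next order, where Wilczynski's relation $\l^{(5)} = -k_3\l''' - k_2\l'' - k_1\l' - k_0\l$ first contributes. The $(s,p)=(5,3)$ matching reads
\[
6\delta_3 = \sum a_i^{(2)} m_i^3 - \tfrac{k_3}{20}\sum a_i^{(0)} m_i^5,
\]
together with the analogous expression for $(b_j,n_j)$. Lagrange interpolation gives $\sum a_i^{(2)} m_i^3 = 2(m_1+m_2+m_3)\eta_2 - (m_1m_2+m_1m_3+m_2m_3)\g_1$ and $\sum a_i^{(0)} m_i^5 = m_1m_2m_3\bigl[(m_1+m_2+m_3)^2 - (m_1m_2+m_1m_3+m_2m_3)\bigr]$ as in (\ref{M5}). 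Subtracting the $\Pi_1$ and $\Pi_2$ expressions for $6\delta_3$ and using the previously established equalities $\sigma_1^m = \sigma_1^n$ and $\sigma_3^m = \sigma_3^n$ (where $\sigma_r$ denotes the $r$th elementary symmetric polynomial), the $\eta_2$-terms and the cubic summands cancel, yielding
\[
(\sigma_2^m - \sigma_2^n)\Bigl(\g_1 - \tfrac{k_3\, m_1m_2m_3}{20}\Bigr) = 0.
\]
Nondegeneracy of $T$ prevents $\sigma_2^m = \sigma_2^n$ (otherwise the two cubics $x^3-\sigma_1 x^2+\sigma_2 x-\sigma_3$ would coincide, forcing $\{m_i\}=\{n_j\}$ and the two planes to collapse), so $\g_1 = \frac{k_3\, m_1m_2m_3}{20} = \frac{3k_3}{10}\g_3$, which in the theorem's notation is $b = \frac{3}{10}a$ with the $k_3$ factor implicit as in Lemma \ref{square}. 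The coefficient $c = \g_0$ is then determined by expanding the normalization $\det(\le,\le',\le'',\le''',\le^{(4)}) = 1$ to order $\e^3$ and solving the resulting linear equation for the $\l$-coefficient, exactly as in the proof of Theorem \ref{basictheorem}.

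The main obstacle is the bookkeeping at orders $\e^4, \e^5$: one must carry the auxiliary variable $\eta_2$ through both planes, check that the $k_2, k_1, k_0$ contributions at $s = 5$ do not affect the $p = 3$ equation, and verify that the two cross-consistencies $\sigma_1^m = \sigma_1^n$, $\sigma_3^m = \sigma_3^n$ combined with the nondegeneracy $\sigma_2^m \ne \sigma_2^n$ are exactly what is needed to eliminate all free parameters and isolate the clean ratio $3/10$.
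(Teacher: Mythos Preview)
Your proof is correct and follows essentially the same route as the paper: expand $\le$ in $\e$, use the two plane conditions to determine the $\a^{(k)},\b^{(k)}$ via Vandermonde inverses, read off the necessary consistency conditions $\sigma_3^m=\sigma_3^n$ and $\sigma_1^m=\sigma_1^n$ from the $\l'''$ and $\l^{(4)}$ equations at orders $\e^3$ and $\e^4$, and then extract the factorization $(\sigma_2^m-\sigma_2^n)(\g_1-\tfrac{k_3M_1}{20})=0$ from the $\e^5$ coefficient of $\l'''$, with nondegeneracy ruling out $\sigma_2^m=\sigma_2^n$. The only differences from the paper are cosmetic---you organize by $(s,p)$ pairs and use elementary symmetric polynomials $\sigma_r$ directly rather than the paper's $M_i$ notation, and your coefficient labels for $D$ and $E$ are swapped relative to the paper's---but the key identity and the degeneracy argument are identical.
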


\begin{proof} Two $3$-dim subspaces through the origin in $\R^5$ are determined by three points each. Assume $x_{n+m_i}$, $i=1,2,3$ are the points in one of them, while $x_{n+n_i}$ are the points in the other one. If $\l(x+\e r)= x(n+r)$ for any $r$, then $\l_\e$ belongs to the intersection of these two subspaces whenever
\begin{eqnarray}\label{first}\l_\e &=& a_1\l(x+m_1\e)+a_2\l(x+m_2\e)+a_3\l(x+m_3\e)\\ &=&  b_1\l(x+n_1\e)+b_2\l(x+n_2\e)+b_3\l(x+n_3\e)
\end{eqnarray}
 If, as before, $\l_\e = \l+\e A+\e^2B+\e^3C +\e^4D+ \e^5E+\e^6 F +o(\e^6)$ and $A = \sum_{i=0}^4 \alpha_i\l^{(i)}$, $B = \sum_{i=0}^4 \beta_i\l^{(i)}$, $C = \sum_{i=0}^4 \gamma_i\l^{(i)}$, $D=\sum_{i=0}^4\d_i\l^{(i)}$, $E=\sum_{i=0}^4\eta_i\l^{(i)}$ and $F=\sum_{i=0}^4\nu_i\l^{(i)}$ then $\alpha_2=\al_3=\al_4=0=\be_3=\be_4=\g_4$ and (\ref{first}) can be split into
\begin{eqnarray*}\label{1-2-3}
1+\alpha_0\e+\beta_0\e^2+\g_0\e^3 +o(\e^3) &=& a_1+a_2+a_3+o(\e^3) = b_1+b_2+b_3+o(\e^3)\\
\alpha_1+\e\beta_1+\e^2\g_1+o(\e^2) &=& \m_1\cdot \a+o(\e^2) = \nb_1\cdot \b+o(\e^2)\\
\be_2+\e\g_2+\e^2\d_2+o(\e^2) &=&\frac1{2!}\m_{2}\cdot \a+o(\e^2) = \frac1{2!}\nb_{2}\cdot \b + o(\e^2)
\end{eqnarray*}
corresponding to the coefficients of $\l$, $\l'$ and $\l''$ and
\begin{eqnarray*}\label{4-5}
 \g_3+\e\d_3+\e^2\eta_3 + o(\e^2) &=& \frac1{3!} \m_{3}\cdot \a-\frac{\e^2}{5!}k_3 \m_{5}\cdot \a +o(\e^2)\\ &=& \frac1{3!} \nb_{3}\cdot \b-\frac{\e^2}{5!}k_3 \nb_{5}\cdot \b +o(\e^2)\\ 
 \d_4+\e\eta_4+\e^2\nu_4+o(\e^2) &=& \frac1{4!} \m_{4}\cdot \a-\frac{\e^2}{6!}k_3 \m_{6}\cdot \a + o(\e^2)\\ &=& \frac1{4!} \nb_{4}\cdot \b-\frac{\e^2}{6!}k_3 \nb_{6}\cdot \b + o(\e^2)
 \end{eqnarray*}
corresponding to the coefficients of $\l^{(3)}$ and $\l^{(4)}$. Here we have used the relation $\l^{(5)} = -k_3\l'''-k_2\l''-k_1\l'-k_0\l$ and we have used the notation $\m_r= (m_1^r,m_2^r, m_3^r)$, as we did in the previous case. Likewise with $\n$.

If, as before, we use the notation $\a = \sum_{j=0}^\infty \a_j\e^j$, $\b = \sum_{j=0}^\infty \b_j\e^j$, then the first three equations completely determine $\a_i$ and $\b_i$, $i=0,1,2$. Indeed, they are given by $\a_i = A(m)^{-1}v_i$, $\b_i = A(n)^{-1}v_i$, where $A(s)$ is given as in (\ref{As}) and where
\begin{equation}\label{vi}
v_0 = \begin{pmatrix} 1\\ \al_1\\ 2!\be_2\end{pmatrix}, v_1 = \begin{pmatrix} \al_0\\ \be_1\\ 2!\g_2\end{pmatrix}, v_2 = \begin{pmatrix} \be_0\\ \g_1\\ 2!\d_2\end{pmatrix}.
\end{equation}
 The last four equations above are extra conditions that we need to impose on these coefficients. They can be rewritten as
 \begin{eqnarray}
\label{uno} 3!\g_3 &=& \m_{3}\cdot \a_0 = \nb_{3}\cdot \b_0\\
\label{dos} 3! \d_3 &=&\m_{3}\cdot \a_1 = \nb_{3}\cdot \b_1\\
\label{tres} 3!\eta_3&=& \m_{3}\cdot\a_2-\frac{3!}{5!} k_3\m_{5}\cdot \a_0 = \nb_{3}\cdot\b_2-\frac{3!}{5!} k_3\nb_{5}\cdot \b_0
 \end{eqnarray}
 and
 \begin{eqnarray}
\label{cuatro} 4!\d_4 &=& \m_{4}\cdot\a_0 = \nb_4\cdot \b_0\\ \label{cinco} 4!\eta_4 &=&\m_4\cdot  \a_1=\nb_4\cdot \b_1\\\label{seis} 4!\nu_4&=&\m_4\cdot \a_2-\frac{4!}{6!} k_3 \m_6\cdot\a_0 = \nb_4\cdot \b_2-\frac{4!}{6!} k_3 \nb_6\cdot\b_0.
 \end{eqnarray}
 
 If as before we denote $(m_1^3, m_2^3, m_3^3) A(m)^{-1} = (M_1, M_2, M_3)$, where $M_i$ are the negative of the basic symmetric polynomials as shown in Lemma \ref{Vandermonde}, then equations (\ref{uno}) imply
 \[
 \begin{pmatrix} M_1&M_2&M_3\end{pmatrix}\begin{pmatrix} 1\\\al_1\\2!\be_2\end{pmatrix} =  \begin{pmatrix} N_1&N_2&N_3\end{pmatrix}\begin{pmatrix} 1\\\al_1\\2!\be_2\end{pmatrix} =3!\g_3.
 \]
 Since our continuous limit needs to be third order, and hence appearing in $C$, to have the proper continuous limit we need $A = B = 0$, thus  $\al_1 = \be_2 = 0$ and $M_1 = N_1 = 3!\g_3$.
 
 From the proof of Lemma \ref{square} and direct calculations, we know that $\m_4A(m)^{-1} = (M_1M_3, M_4, M_5)$, where $M_4 = -(m_1+m_2)(m_2+m_3)(m_1+m_3) = M_3M_2+M_1$ and $M_5 = M_3^2+M_2$ is as in (\ref{M5}). Therefore, equation (\ref{cuatro}) can be rewritten as
 \[
  \begin{pmatrix} M_1M_3&M_4&M_5\end{pmatrix}\begin{pmatrix} 1\\\al_1\\2!\be_2\end{pmatrix} =  \begin{pmatrix} N_1N_3&N_4&N_5\end{pmatrix}\begin{pmatrix} 1\\\al_1\\2!\be_2\end{pmatrix} =4!\d_4
  \]
from which we can conclude that, if $\alpha_1 = 0 = \beta_2$, then $M_1M_3 = N_1 N_3$, that is $M_3 = N_3$.  (Notice that $M_1=N_1\ne 0$ since $M_1=N_1=0$ implies both planes intersect at $x_n$ and $T$ is the identity.)

In order to have (\ref{ev4}) as continuous limit we will need $v_1 = 0$ which implies $\a_1 = \b_1 = 0$. Using (\ref{dos}) and (\ref{cinco}) we get $\delta_3 = \eta_4 = 0$. 

Finally $v_2 = \g_1e_2+2!\d_2e_3$ and $\m_5A^{-1}(m)e_1 = M_1 M_5 = M_1(M_3^2+M_2)$ as in the proof of Lemma \ref{square}. Therefore (\ref{tres}) becomes 
\[
3!\eta_3 = M_2\g_1+2!M_3\d_2 - \frac{3!}{5!} M_1(M_3^2+M_2) = N_2\g_1+2!N_3\d_2 - \frac{3!}{5!} N_1(N_3^2+N_2).
\]
Since we already know that $M_1 = N_1$ and $M_3 = N_3$, this equation becomes
\[
M_2(\g_1-\frac{3!}{5!}M_1) = N_2 (\g_1-\frac{3!}{5!}M_1).
\]
From here, either $\g_1= \frac{3!}{5!}M_1 = \frac{(3!)^2}{5!}\g_3$, resulting in the continuous limit displayed in the statement of the Theorem, or $M_2 = N_2$. But, from Lemma \ref{Vandermonde} conditions $M_1 = N_1$, $M_2=N_2$ and $M_3=N_3$ imply $m_1 = n_1, m_2 = n_2$ and $m_3 = n_3$. This is a degenerate case, the three planes are equal.
  \end{proof}
  
\begin{theorem} There exists a map $T: \P^n \to \P^n$  in $\RP^4$ defined using option (\ref{3}) whose continuous limit (as previously defined) is integrable and has a lifting given by (\ref{ev4}). The map is not unique.
\end{theorem}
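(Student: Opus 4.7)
The plan is to follow the procedure used in Theorem \ref{case2} and in the $\RP^3$ theorem above. Parametrize the three subspaces as
\[
\p_1 = \langle x_{n+m_1},x_{n+m_2},x_{n+m_3}\rangle,\qquad
\p_2 = \langle x_{n+n_1},\dots,x_{n+n_4}\rangle,\qquad
\p_3 = \langle x_{n+r_1},\dots,x_{n+r_4}\rangle,
\]
and write $\le=\l+\e A+\e^2 B+\e^3 C+o(\e^3)$ with $A,B,C$ each expanded in the basis $\{\l,\l',\dots,\l^{(4)}\}$. The three incidence conditions
\[
\le=\sum_{i=1}^3 a_i\,\l(x+m_i\e)=\sum_{j=1}^4 b_j\,\l(x+n_j\e)=\sum_{k=1}^4 c_k\,\l(x+r_k\e),
\]
after Taylor expansion and the Wilczynski reduction $\l^{(5)}=-k_3\l'''-k_2\l''-k_1\l'-k_0\l$, become polynomial identities in $\e$ whose coefficients in each $\l^{(s)}$ must agree on all three sides.

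First I would solve for the vectors $\a,\b,\c$. Matching $\l,\l',\l''$ in the $\p_1$ relation determines $\a=A(m)^{-1}v$ via a $3\times 3$ Vandermonde system as in (\ref{As})--(\ref{vi}); matching $\l,\l',\l'',\l'''$ in the $\p_2$ and $\p_3$ relations determines $\b$ and $\c$ via $4\times 4$ Vandermondes $A_4(n)$ and $A_4(r)$. The leftover identities---the $\l'''$ and $\l^{(4)}$ coefficients from $\p_1$ and the $\l^{(4)}$ coefficient from each of $\p_2,\p_3$---give the extra algebraic conditions at every order in $\e$. By Lemma \ref{Vandermonde}, each of the combinations $\m_s A(m)^{-1}e_1$, $\nb_s A_4(n)^{-1}e_1$, $\r_s A_4(r)^{-1}e_1$ is a polynomial in the elementary symmetric functions $M_i,N_i,R_i$ of the shifts, so every extra condition translates into a polynomial relation in those symmetric functions.

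Next I would enforce $A=B=0$ and read off the relevant entries of $C$. Just as in Theorem \ref{case2}, requiring $A=B=0$ at orders $\e^0,\e^1,\e^2$ forces the equality of leading symmetric polynomials ($M_1=N_1=R_1$ together with further linear relations among $M_2,M_3,N_2,N_3,R_2,R_3$), after which the coefficient $\g_3$ of $\l'''$ in $C$ is proportional to $M_1$. The coefficient $\g_1$ of $\l'$ in $C$ is then obtained by combining the extra conditions at orders $\e^3,\e^4,\e^5$ into a linear system of the type (\ref{Meq}); its right-hand side is $k_3$ times a rational function of the $M_i,N_i,R_i$, so $\g_1$ itself is of this form. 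Imposing the target identity $\g_1=\tfrac{27}{5}k_3\g_3$ coming from (\ref{ev4}) reduces to a single rational equation in the elementary symmetric polynomials of $m_i,n_j,r_k$, the direct analogue of condition (\ref{level}).

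Finally I would exhibit explicit integer solutions. As in the $\RP^3$ case the cleanest route is a symmetric ansatz in which several of the three subspaces share indices, cutting the number of free parameters enough that a short computer search over a bounded integer range locates a minimal valid $(m_i,n_j,r_k)$ and at least one further essentially different one, which proves non-uniqueness. Along the way I have to check nondegeneracy: pairwise distinctness of the $m_i$ (resp. $n_j$, $r_k$) within each subspace so that the Vandermondes are invertible, maximal rank of the matrix combining the extra conditions (the analogue of (\ref{check})) so that the symmetric-polynomial identities truly imply $A=B=0$, no vanishing shift (which would force $T=\mathrm{id}$), and avoidance of the Lemma \ref{square}-type degeneracies in which the $k_3$-coefficient collapses identically. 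The hard part will be exactly this last step: as the subsection's preamble warns, the Diophantine system arising from three subspaces of mixed codimension grows rapidly in degree and number of variables, solutions are not guaranteed a priori, and their discovery typically relies on the symmetric-polynomial reformulation plus a computer-assisted search rather than a closed-form construction.
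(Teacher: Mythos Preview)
Your overall plan matches the paper's: set up the three incidence conditions, solve the $3\times3$ Vandermonde for $\a$ and the $4\times4$ Vandermondes for $\b,\c$, use the leftover $\l'''$ and $\l^{(4)}$ equations as constraints, force $A=B=0$, read off $\gamma_3$ and $\gamma_1$, impose $\gamma_1=\tfrac{27}{5}k_3\gamma_3$, and search for integer solutions.

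However, two points in your sketch are actually wrong and would derail the computation. First, the condition you state for $A=B=0$, namely $M_1=N_1=R_1$, cannot be right here: the $M_i$ are elementary symmetric polynomials in \emph{three} shifts while the $N_i,R_i$ are in \emph{four}, so these quantities are not even of the same type. What really happens is that the $\l'''$ equation from $\p_1$ determines $\gamma_3=\tfrac{1}{6}M_1$ first, and this value then enters the $4\times4$ systems for $\b_0,\c_0$ through $w_0=e_1+M_1e_4$. The zeroth-order $\l^{(4)}$ conditions then read $M_1M_3=N_1+N_4M_1=R_1+R_4M_1$, not $M_1=N_1=R_1$. This coupling between the plane and the hyperplanes is the essential new feature of the mixed-codimension case and your outline misses it.

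Second, your nondegeneracy check ``no vanishing shift (which would force $T=\mathrm{id}$)'' is too strong and in fact rules out the simplification that makes the problem solvable. The paper \emph{chooses} $n_4=r_4=0$, i.e.\ both hyperplanes contain $x_n$; this gives $N_1=R_1=0$ and collapses the conditions above to $M_3=N_4=R_4$, after which the full-rank matrix and the final Diophantine equation become manageable. Only a zero shift in the \emph{plane} $\p_1$ would force $M_1=0$ and trivialize $T$. With this ansatz the paper obtains explicit solutions such as $(m_1,m_2,m_3)=(7,-1,-7)$, $(n_1,n_2,n_3)=(3,-1,-3)$, $(r_1,r_2,r_3)=(6,-3,-4)$ (with $n_4=r_4=0$), and several others, establishing non-uniqueness.
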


Before we prove this theorem, let me point at the apparent pattern we see here: The $L^{\frac2{m+1}}$-Hamiltonian flow, $m\ge 2$, is the continuous limit of a map obtained when intersecting a $1$-dim line and a $m-1$-dim subspace of $\RP^m$. The $L^{\frac3{m+1}}$-Hamiltonian flow, $m = 3,4$, is the continuous limit of a map defined by intersecting one $2$-dim plane and two $m-1$-dim subspaces in $\RP^m$. This pattern leads us to the following conjecture.

\begin{conjecture}
The AGD Hamiltonian flow associated to the $L^{\frac k{m+1}}$--Hamiltonian is the continuous limit of maps defined analogously to the pentagram map through the intersection of one $k-1$-dimensional subspace and $k-1$ $m-1$-dimensional subspaces of $\RP^m$.
\end{conjecture}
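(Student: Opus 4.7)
The plan is to generalize the strategy that worked for the $k=2$ case (Theorem \ref{basictheorem}) and for the $k=3$ cases in $\RP^3$ and $\RP^4$ to arbitrary $m$ and $k$. Two expansions must be matched: the lift to $\R^{m+1}$ of the projective AGD realization associated to $\HH(L) = \int_{S^1}\mathrm{res}(L^{k/(m+1)})dx$, and the $\e$-expansion of the discrete map $T$ obtained by intersecting one $(k-1)$-projective subspace with $k-1$ hyperplanes in $\RP^m$. The proof will succeed if and only if the integer indices parametrizing the subspaces can be chosen to make these expansions agree through the order at which the AGD flow first appears.

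On the analytic side, I would extend the residue calculations already carried out for $k=3$. Write $L^{1/(m+1)} = D + \sum_{j\ge 1}\ell_j D^{-j}$, recursively solve $(L^{1/(m+1)})^{m+1}=L$ for the $\ell_j$ as differential polynomials in $k_{m-1},\dots,k_0$, and extract $\mathrm{res}(L^{k/(m+1)})$. The resulting $\delta_k\HH$ will have leading entry at position $e_{k-1}$ up to lower-order differential terms in the $k_i$. Passing to $\delta_\ka\HH$ via the upper-triangular change of variables (\ref{changeofvariable}) and applying the gauge $g$ from Lemma \ref{gauge} (whose first nonzero super-diagonal is $k_{m-1}$), Theorem \ref{projreal} will produce a lift of the form
\[
\l_t = \l^{(k)} + c_1(m,k)\,k_{m-1}\,\l^{(k-2)} + \dots + r_0\l,
\]
with $c_1(m,k)$ an explicit rational constant and $r_0$ fixed by (\ref{norm}).

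On the discrete side, I would parametrize the $(k-1)$-projective subspace by $k$ integer offsets $m_1,\dots,m_k$ and each of the $k-1$ hyperplanes by $m$ integer offsets $n^{(j)}_1,\dots,n^{(j)}_m$, $j=1,\dots,k-1$. Writing $\le = \l + \sum_{s\ge 1}\e^s A_s$ with $A_s = \sum_{r=0}^m \al_{s,r}\l^{(r)}$ and imposing that $\le$ lie in every subspace, order-by-order expansion in $\e$ will yield Vandermonde systems of the type (\ref{As}) for the coefficient vectors attached to each subspace; by Lemma \ref{Vandermonde}, their compatibility will be governed by elementary symmetric polynomials in the offsets. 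Requiring $A_1 = \dots = A_{k-1} = 0$, together with the normalization (\ref{norm}), will translate into a Diophantine system relating these symmetric polynomials across the $k-1$ hyperplanes and the small subspace. Matching the $\l^{(k-2)}, \l^{(k-4)}, \dots$ coefficients of $A_k$ with those in the AGD lift will produce the remaining rational constraints, generalizing condition (\ref{level}).

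The hard part, and why the statement remains a conjecture, is showing that this Diophantine system always admits a non-degenerate integer solution. A naive count of $k + (k-1)m$ unknowns against roughly $(k-1)^2 + \lfloor k/2\rfloor$ constraints suggests solutions should exist once $m$ is not too small, but the paper's $\RP^4$ analysis shows that superficially similar configurations can fail. I would attack this in two complementary ways. First, I would construct symmetric ans\"atze generalizing the $\RP^3$ triple $(-c,a,b),(c,-a,b),(c,-1,ab)$ and the known $\RP^4$ discretization, choosing the hyperplanes to share common offsets so that most symmetric-polynomial identities hold automatically and the system collapses to one or two rational equations solvable by finite search. Second, and more structurally, I would look for a determinantal or generating-function identity among shifted Vandermonde matrices that \emph{explains} compatibility uniformly in $m$ and $k$, rather than verifying it case by case. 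Non-degeneracy — distinct subspaces and an intersection that does not reduce to $x_n$ itself — would then be a finite check analogous to the exclusions $a\ne 1$, $b\ne c$, etc., arising in the $\RP^3$ theorem.
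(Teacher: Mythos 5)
The statement you are addressing is stated in the paper as a \emph{conjecture}, and the paper offers no proof of it: the only support given is the $k=2$ case for general $m$ (Theorem \ref{basictheorem} and the integrability theorem following it) and the $k=3$ cases in $\RP^3$ and $\RP^4$, each settled by an explicit residue computation plus an explicit Diophantine search. Your proposal accurately reconstructs that methodology --- the two expansions to be matched, the Vandermonde/elementary-symmetric-polynomial bookkeeping of Lemma \ref{Vandermonde}, the role of the gauge $g$ and of (\ref{changeofvariable}), and the reduction to integer constraints generalizing (\ref{level}) --- but it does not close the one step that makes the statement a conjecture rather than a theorem: you give no argument that the resulting Diophantine system admits a non-degenerate integer solution for arbitrary $m$ and $k$. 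Your own text concedes this (``why the statement remains a conjecture''), and the two remedies you sketch (symmetric ans\"atze collapsing the system, or a uniform determinantal identity among shifted Vandermonde matrices) are programs, not arguments; neither is carried out even in a single new case beyond those already in the paper. The paper's own evidence cuts both ways here: it explicitly warns that naive parameter counts are misleading (configuration (\ref{2}) in $\RP^4$ has enough free integers yet provably yields the wrong coefficient $\frac{3}{10}$ rather than $\frac{27}{5}\cdot\frac{1}{3!}\cdot 3!$, by Theorem \ref{case2}), so your heuristic count of $k+(k-1)m$ unknowns against $(k-1)^2+\lfloor k/2\rfloor$ constraints cannot substitute for an existence proof.

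Two smaller points. First, the leading position of $\delta_k\HH$ is $e_k$, not $e_{k-1}$: the paper computes $\delta_k\HH=\frac{2}{n}e_2$ for the $L^{2/n}$ flow and $\delta_k\HH=\frac35\bigl(e_3+\frac{22}{5}k_3e_1\bigr)$ for $L^{3/5}$, so your indexing is off by one; this propagates into which super-diagonals of $g$ contribute and hence into which coefficients must be matched. Second, for $k\ge 4$ the lift will generically contain intermediate terms $c_2(m,k)\,\l^{(k-3)}$ and terms involving $k_{m-2}$ and derivatives of $k_{m-1}$, not only the single correction $c_1(m,k)k_{m-1}\l^{(k-2)}$ you display, so the number of rational constraints to be satisfied grows faster than your sketch suggests. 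Neither point is fatal to the program, but both would need repair before the program could even be launched for $k=4$.
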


\begin{proof}[Proof of the theorem] The proof is a calculation similar to the one in the previous Theorem. Because of the symmetry in the integers $m_i$, $n_i$ and $r_i$, we will write the equations for the planes in terms of the elementary symmetric polynomials $M_i$, $N_i$ and $R_i$. This will both simplify the equations and will reduce their order, making it easier to solve.

If we use the notation in Theorem \ref{case2} we get similar initial equations, except for the fact that the plane corresponding to integers $m_1,m_2,m_3$ and associated to $\a = (a_1,a_2,a_3)$ coefficients is three dimensional, while the ones associated to integers $n_1,n_2,n_3,n_4$ and $r_1,r_2,r_3,r_4$ and associated to $\b$ and $\c$ are four dimensional, $\b=(b_1, b_2, b_3, b_4)$, $\c=(c_1,c_2,c_3,c_4)$. Therefore, instead of (\ref{1-2-3}) and subsequent equations, we have 
\begin{eqnarray*}\label{6-7-8}
1+\alpha_0\e+\beta_0\e^2+\g_0\e^3 +o(\e^3) &=& a_1+a_2+a_3+o(\e^3) \\&=& b_1+b_2+b_3+b_4+o(\e^3)\\&=& c_1+c_2+c_3+c_4+o(\e^3)\\
\alpha_1+\e\beta_1+\e^2\g_1+o(\e^2) &=& \m_1\cdot \a+o(\e^2) = \nb_1\cdot \b+o(\e^2)= \r_1\cdot \c +o(\e^2)\\
\be_2+\e\g_2+\e^2\d_2+o(\e^2) &=&\frac1{2!}\m_{2}\cdot \a+o(\e^2) \\&=& \frac1{2!}\nb_{2}\cdot \b + o(\e^2)=\frac1{2!}\r_{2}\cdot \c+o(\e^2)
\end{eqnarray*}
corresponding to the coefficients of $\l$, $\l'$ and $\l''$ and
\begin{eqnarray*}\label{4-5}
 \g_3+\e\d_3+\e^2\eta_3 + o(\e^2) &=& \frac1{3!} \m_{3}\cdot \a-\frac{\e^2}{5!}k_3 \m_{5}\cdot \a +o(\e^2)\\ &=& \frac1{3!} \nb_{3}\cdot \b-\frac{\e^2}{5!}k_3 \nb_{5}\cdot \b +o(\e^2)\\ &=& \frac1{3!} \r_{3}\cdot \c-\frac{\e^2}{5!}k_3 \r_{5}\cdot \c +o(\e^2)\\
 \d_4+\e\eta_4+\e^2\nu_4+o(\e^2) &=& \frac1{4!} \m_{4}\cdot \a-\frac{\e^2}{6!}k_3 \m_{6}\cdot \a + o(\e^2)\\ &=& \frac1{4!} \nb_{4}\cdot \b-\frac{\e^2}{6!}k_3 \nb_{6}\cdot \b + o(\e^2)\\ &=& \frac1{4!} \r_{4}\cdot \c-\frac{\e^2}{6!}k_3 \r_{6}\cdot \c + o(\e^2)
 \end{eqnarray*}
corresponding to the coefficients of $\l'''$ and $\l^{(4)}$. Using the first three equations we can solve for $\a_0$, $\a_1$ and $\a_2$ as $\a_i = A_3^{-1}(m)v_i$, where the subindex in the $A_3(m)$ refers to the size of the Vandermonde matrix, and where $v_i$ are as in (\ref{vi}). Using the fourth equation for $\a$ we can solve for $\g_3, \d_3$ and $\eta_3$, values that we will use in our next step. Using the first four equations for $\b$ and $\c$, we can solve for $\b_i$ and $\c_i$, $i = 0,1$ as $\b_i = A_4^{-1}(n)w_i$, $\c_i = A_4^{-1}(r)w_i$ with 
\[
w_0 = \begin{pmatrix}1\\\al_1\\2!\beta_2\\ 3!\gamma_3\end{pmatrix}, w_1 = \begin{pmatrix} \al_0\\\beta_1\\2!\g_2\\3!\d_3\end{pmatrix}
\]
and we can also solve for $\b_2 = A^{-1}_4(n)w_2^b$, $\c_2 = A^{-1}_4(r)w_2^c$ where
\[
w_2^b = \begin{pmatrix}\be_0\\\g_1\\2!\d_2\\3!\eta_3+\frac{3!}{5!}k_3\nb_5A^{-1}_4(n)w_0\end{pmatrix}, w_2^c = \begin{pmatrix}\be_0\\\g_1\\2!\d_2\\3!\eta_3+\frac{3!}{5!}k_3\r_5A^{-1}_4(r)w_0\end{pmatrix}.
\]
Substituting all these values in the last equation gives us a number of equations that will help us identify the parameter values for $\alpha_i, \beta_i, \gamma_i$, etc.

There are three zeroth order equations for $\d_4$. We can eliminate $\delta_4$ to obtain the following two equations for $\al_1$ and $\be_2$. (The calculations are a little long, but otherwise straightforward.)
\begin{eqnarray*}
M_1M_3-(N_1+N_4M_1)+[M_4-(N_2+N_4M_2)] \al_1+2![M_5-(N_3+N_4M_3)]\be_2 &=& 0\\
M_1M_3-(R_1+R_4M_1)+[M_4-(R_2+R_4M_2)] \al_1+2![M_5-(R_3+R_4M_3)]\be_2 &=& 0
\end{eqnarray*}
where $M_i$ $i=1,2,3$, $N_i$ and $R_i$, $i=1,2,3,4$ are as in lemma \ref{Vandermonde}, and $M_4 = M_3M_2+M_1$, $M_5 = M_3^2+M_2$ were given in the proof of the previous theorem. 
From here, conditions 
\begin{equation}\label{condition1}
M_1M_3 = N_1+N_4M_1,\hskip .2in M_1M_3 = R_1+R_4M_1
\end{equation}
together with the matrix
\begin{equation}\label{full rank}
M = \begin{pmatrix} M_4-(N_2+N_4M_2)&M_5-(N_3+N_4M_3)\\ M_4-(R_2+R_4M_2)&M_5-(R_3+R_4M_3)\end{pmatrix}
\end{equation}
having full rank, will ensure that $\alpha_1 = \beta_2 = 0$. As a result of this we have 
\[
v_0 = e_1,\hskip.2in \gamma_3= \frac 1{3!} M_1,\hskip.2in w_0 = e_1 + M_1 e_4.
\]
There are also three first order equations for $\eta_4$. Notice that $\al_0 = 0$ once we impose condition (\ref{norm}) to $\l_\e$. Again, after some rewriting we get the system 
\begin{eqnarray*}
0&=&[M_4-(N_2+N_4M_2)] \be_1+2![M_5-(N_3+N_4M_3)]\g_2  \\
0&=&[M_4-(R_2+R_4M_2)] \be_1+2![M_5-(R_3+R_4M_3)]\g_2 
\end{eqnarray*}
and hence the rank condition on matrix (\ref{full rank}) will ensure $\be_1 = \g_2 = 0$. With these values we also get $v_1 = 0$, $\d_3 = 0$ and $w_1=0$. Using the normalization of $\l_\e$ again we obtain $\be_0 = 0$, and from here 
\[
A = B = 0,\hskip 2ex C = \frac1{3!}\l''' + \g_1\l'+\g_0\l.
\]
We are left with the determination of $\g_1$, since $\g_0$ is determined by normalizing conditions.

From now on, let us assume that both the $n$-plane and the $r$-plane include $x_n$, so that we can assume that $n_4 = 0 = r_4$. In such case, $N_1 = R_1 = 0$ and so conditions (\ref{condition1}) becomes $M_3 = N_4 = R_4$. With this assumption, and using the fact that $M_4 = M_1+M_3M_2$ and $M_5 = M_3^2+M_2$, the matrix (\ref{full rank}) becomes
\begin{equation}\label{full rank2}
M=\begin{pmatrix} M_1-N_2&M_2-N_3\\ M_1-R_2&M_2-R_3\end{pmatrix}.
\end{equation}

Finally, we use the equations involving $\nu_4$. We have three of them, and so we can get rid of $\nu_4$ and obtain a system of equations for $\g_1$ and $\d_2$. The calculations are a little long, but they are straightforward if we use the following relations (themselves obtained straightforwardly).

\begin{eqnarray*}\m_6A_3^{-1}(m)e_1 &=& M_1(M_3^3+3M_1+2M_2M_3)\\ \nb_5A_4^{-1}(n)e_1 &=& N_1N_4\\ \nb_5A_4^{-1}(n)e_4 &=& N_4^2+N_3\\
 \nb_6A_4^{-1}(n)e_1 &=& N_1(N_4^2+N_3)\\ \nb_6A_4^{-1}(n)e_4 &=& N_4^3+2N_4N_3+N_2\end{eqnarray*}
and the fact that $N_4 = M_3$. The resulting system is given by $M \begin{pmatrix}\g_1\\ 2!\d_2\end{pmatrix} = N$, where $M$ is the matrix in (\ref{full rank2}) and where 
\[
N = -\frac1{60}k_3M_1\begin{pmatrix}M_3(N_3-M_2)+2(N_2-3M_1)\\ M_3(R_3-M_2)+2(R_2-3M_1)\end{pmatrix}.
\]
Clearly, $\g_1$ is then given by
\[
\g_1 = -\frac1{60\det M} k_3M_1\det\begin{pmatrix}M_3(N_3-M_2)+2(N_2-3M_1)&M_2-N_3\\ M_3(R_3-M_2)+2(R_2-3M_1)&M_2-R_3\end{pmatrix}.
\]
If we now impose the condition \[\g_1 = \frac{27}5\g_3 = \frac{27}{6!5} M_1 k_3\] we get the equation
\[
20\det\begin{pmatrix}M_3(N_3-M_2)+2(N_2-3M_1)&M_2-N_3\\ M_3(R_3-M_2)+2(R_2-3M_1)&M_2-R_3\end{pmatrix} = -9\det\begin{pmatrix}M_1-N_2&M_2-N_3\\ M_1-R_2&M_2-R_3\end{pmatrix}
\]
This equation can be easily programed. Using Maple to calculate the equations and a simple C-program to solve them, one can check that the smallest solutions involve integer values up to 7. Some of these solutions are
\[\begin{array}{ccc}
m_1 =  7, m_2 = -1, m_3 = -7,&\hskip .1in n_1 =  3, n_2 = -1, n_3 = -3,&\hskip .1in r_1 =  6, r_2 = -3, r_3 = -4,\\
m_1 =  7, m_2 = -1, m_3 = -7,&\hskip .1in n_1 =  3, n_2 = -1, n_3 = -3,&\hskip .1in r_1 =  4, r_2 = -2, r_3 = -3\\
m_1 =  7, m_2 = -1, m_3 = -7,&\hskip .1in n_1 =  6, n_2 = -3, n_3 = -4,&\hskip .1in r_1 =  4, r_2 = -2, r_3 = -3.
\end{array}
\]
\end{proof}
 

\end{document}